
\documentclass[a4paper,11pt,notitlepage]{article}

\usepackage[top=2cm, bottom=2cm, left=2cm, right=2cm]{geometry}
\usepackage{amsfonts}
\usepackage{amssymb}
\usepackage{graphicx}
\usepackage{amsmath}
\usepackage{graphicx}
\usepackage{pdfpages}
\usepackage{setspace}
\usepackage{multirow}
\usepackage[hidelinks]{hyperref}

\setcounter{MaxMatrixCols}{10}

\newtheorem{theorem}{Theorem}

\newtheorem{lemma}[theorem]{Lemma}

\newtheorem{proposition}[theorem]{Proposition}

\newenvironment{proof}[1][Proof]{\textbf{#1.} }{\ \rule{0.5em}{0.5em}}

\makeatletter
\renewcommand{\fnum@figure}[1]{\textbf{\figurename~\thefigure}.}
\makeatother
\makeatletter
\renewcommand{\fnum@table}[1]{\textbf{\tablename~\thetable}.}
\makeatother

\begin{document}
\singlespacing

\title{Circulant Singular Spectrum Analysis: A new automated procedure for signal extraction
\thanks{
Financial
support from the Spanish government, contract grants MINECO/FEDER
ECO2015-70331-C2-1-R, ECO2015-66593-P and ECO2016-76818-C3-3-P is
acknowledged.~}
}

\author{Juan B\'{o}galo \\
{\small Universidad de Alcal\'{a}}\\
{\small SPAIN} \and Pilar Poncela \\
{\small Universidad Aut\'{o}noma de Madrid}\\
{\small SPAIN} \and Eva Senra \\
{\small Universidad de Alcal\'{a}}\\
{\small SPAIN}}
\maketitle
\date{}

\begin{abstract}
\onehalfspacing
Sometimes, it is of interest to single out the fluctuations associated to a given frequency. We propose a new variant of SSA, Circulant SSA (CiSSA), that allows to extract the signal associated to any frequency 
specified beforehand. This is a novelty when compared with other procedures that need to identify ex-post the frequencies associated to extracted signals. We prove that CiSSA is asymptotically equivalent to 
these alternative procedures although with the advantage of avoiding the need of the subsequent frequency identification. We check its good performance and compare it to alternative SSA methods
through several simulations for linear and nonlinear time series. We also prove its validity in the nonstationary case. To show how it works with real data, we apply CiSSA to extract the business cycle and 
deseasonalize the Industrial Production Index of six countries. Economists follow this indicator in order to assess the state of the economy in real time. We find that the estimated cycles match the dated recessions from the OECD showing its reliability for business cycle analysis. Finally, we analyze the 
strong separability of the estimated components. In particular, we check that the deseasonalized time series do not show any evidence of residual seasonality.

\bigskip

\textbf{Keywords}: circulant matrices, principal components, signal
extraction, singular spectrum analysis, singular value decomposition


\end{abstract}
\newpage

\doublespacing
\section{Introduction}

Singular Spectrum Analysis (SSA) is a non-parametric procedure
based on subspace algorithms for signal extraction \cite{Golyandina_Zhigljavsky_2013}. The main task in SSA is to extract the underlying signals of a time series like the trend, cycle,
seasonal and irregular components. It has been applied to a wide range of
time series problems, besides signal processing \cite{Golyandina_2020}, like forecasting \cite{Khan_Poskitt_2017}, missing value imputation \cite{Mahmoudvand_Rodrigues_2016} or functional time series \cite{Haghbin_others_2019} among others.
SSA builds a trajectory matrix by putting
together lagged pieces of the original time series and works with the
Singular Value Decomposition of this matrix. It can be viewed as applying Principal Component (PC) analysis to the columns of the trajectory matrix.

SSA has been applied in different disciplines as several authors illustrate (see \cite{Golyandina_Korobeynikov_2014} and the references therein). For instance, there are recent applications in biometry \cite{Safi_others_2018}, climatology \cite{Yurova_others_2019}, energy \cite{Kumar_Jain_2010} or volcanic activity \cite{Bozzo_others_2010}.
In business and economics, SSA has been reviewed for economic and financial time series, focusing on forecasting and business cycle analysis \cite{Hassani_Thomakos_2010}. On the other hand, analyze the effects of forecasting with SSA before and after the 2008 recession \cite{Hassani_others_2013a, Silva_Hassani_2015}, forecast the inflation dynamics \cite{Hassani_others_2013b} and forecast industrial production with multivariate SSA \cite{Hassani_others_2013c} are examples of empirical applications on forecasting with SSA. Related to the business cycle, the corresponding to US is tracked \cite{Carvalho_others_2012}, the real time
nowcasting of the output gap is studied \cite{Carvalho_Rua_2017} and the economic cycles and their synchronization in three European countries are analized \cite{Sella_others_2016}. SSA has also been applied to estimate stochastic volatility models \cite{Arteche_Garcia_2017} and intraday data forecasting \cite{Lahmiri_2018}.

The common practice when applying SSA is to extract the Principal Components of the trajectory matrix and to identify afterwards the frequencies associated to the extracted components, by analysing their estimated periodogram \cite{Carvalho_Rua_2017, Alexandrov_Golyandina_2005, Vautard_others_1992} or response function \cite{Kume_2013, Tome_others_2018} just to cite a few. 
Though there are fast computing algorithms for the eigenvalues and eigenvectors of Toeplitz matrices \cite{Elden_Sjostrom_1996, Korobeynikov_2009}, the use of circulant matrices have a great advantage as their eigenvalues and eigenvectors have a closed form. In a different context \cite{Das_others_2018} also use circulant matrices within the MUSIC algorithm but restricted to signals that are approximately periodic and deterministic. 

We propose a new SSA methodology (CiSSA), that can be applied to any time series signal, based on circulant matrices that, once the user has decided beforehand the frequency of interest, it automatically matches this frequency with specific principal components. 
Circulant matrices become relevant in this
setup, as their eigenstructure can be obtained as a function of the frequency and, therefore, we
can automatically identify their eigenvalues and eigenvectors associated to any particular frequency.
Our approach, CiSSA, valid in a general setting, automatically identifies the eigenvalues and eigenvectors associated to any
particular frequency using circulant matrices. Moreover, we obtain an easy way to evaluate the spectral density since the later approximates with the eigenvalues at the matched frequencies. 

CiSSA seems to perform and compare well with previous versions of SSA, like Basic or Toeplitz SSA, despite introducing its automatization. In order to show this, first, we have proved that CiSSA is asymptotically equivalent to these alternative procedures. Second, we have checked its performance in practice through several sets of simulations for linear and nonlinear models. 
Finally, we have extended its validity for non-stationary time series. Although SSA has been successfully used in non-stationary time series previously, e.g., \cite{Ma_others_2010}, our value added is that we apply it in an automated way and also provide a theoretical background overcoming the assumption of stationarity.

In summary, our contribution is to propose a new version of SSA, Circulant
SSA, for signal extraction in an automated way valid for any type of signal. With this new version, we make heavy use of circulant matrices and obtain reliable components associated to any pre-specified frequency, both for stationary and non-stationary time series. 

We illustrate CiSSA procedure by applying it to the Industrial Production
Index (IP) of six developed countries. IP is a relevant indicator to follow
the business cycle and its seasonally adjusted signal is followed in real
time to monitor the economy. We check that our estimated cycles match the official dating of recessions provided by the OECD. Finally, we also study the strong separability of
the estimated components.

The structure of this paper is as follows: Section 2 briefly describes the
SSA technique. Section 3 proposes our new SSA procedure, named after
Circulant SSA, proves its asymptotic equivalence to Basic and Toeplitz SSA and extends its use for nonstationary time series. Section
4 presents a set of simulations to check the properties of the proposed
methodology. Section 5 applies it to the estimation of the business cycle
of the industrial production index in six countries and checks the good
properties of the obtained estimations as compared with the official dating of the OECD. Finally, Section 6 concludes.

\section{SSA methodology}

The origin of SSA dates back to 1986 with the publication of the papers by
Broomhead and King \cite{Broomhead_King_1986a, Broomhead_King_1986b} and Fraedrich \cite{Fraedrich_1986}. In 1989, Vautard and Ghil \cite{Vautard_Ghil_1989} introduce Toeplitz SSA for stationary time series and, three years later, Vautard et al. \cite{Vautard_others_1992} derive the algorithm called diagonal averaging to obtain the extracted components with the length of the original series. At the same time, and independently, the so-called Caterpillar technique was developed in the former Soviet Union \cite{Danilov_Zhigljavsky_1997}.

In this section we briefly describe the steps used in SSA to decompose a
time series in its unobserved components (trend, cycle,...). Basically, SSA
is a technique in two stages: decomposition and reconstruction. In the first
stage, decomposition, we transform the original vector of data into a
related trajectory matrix and perform its singular value decomposition to
obtain the so called elementary matrices. This corresponds to steps 1 and 2
in the algorithm. In the second stage, reconstruction, (steps 3 and 4 of the
algorithm) we classify the elementary matrices into disjoint groups associating each
group to an unobserved component (trend, cycle,...). Finally, we transform
every group into an unobserved component of the same size of the original
time series by diagonal averaging. 

To proceed with the algorithm, let $\left\{ x_{t}\right\}$ denote a stochastic process $t \in \cal T$ and let $\left\{ x_{t}\right\}_{t=1} ^{T}$ be a realization \footnote{For simplicity, we use the same notation for the stochastic process and for the observed time series. It will be clear from the context if we are referring to the population or to the sample. If it were not, we would explicetely clarify it in the main text.} of $x_t$ of length $T$, $\mathbf{x}=(x_{1},...,x_{T})^{\prime}$, where the prime denotes transpose and $L$ a positive integer, called the
window length, such that $1<L<T/2$. The Basic SSA or Broomhead-King (BK)
procedure involves the following 4 steps:

\textbf{1st step: Embedding}

From the original time series we will obtain an $L\times N$ trajectory
matrix $\mathbf{X}$, $N=T-L+1$, as follows
\begin{equation}
\mathbf{X=}\left( \mathbf{x}_{1}|...|\mathbf{x}_{N}\right) =\left( 
\begin{array}{ccccc}
x_{1} & x_{2} & x_{3} & ... & x_{N} \\ 
x_{2} & x_{3} & x_{4} & ... & x_{N+1} \\ 
\vdots & \vdots & \vdots & \vdots & \vdots \\ 
x_{L} & x_{L+1} & x_{L+2} & ... & x_{T}%
\end{array}%
\right) \label{trajectory}
\end{equation}%
where $\mathbf{x}_{j}=(x_{j},...,x_{j+L-1})^{\prime }$ indicates the $L \times 1$ vector
with origin at time $j$. Notice that the trajectory matrix $%
\mathbf{X}$ is Hankel and both, by columns and rows, we obtain subseries of
the original one.

\textbf{2nd step: Decomposition}

In this step, we perform the singular value decomposition (SVD) of the
trajectory matrix $\mathbf{X=UD}^{1/2}\mathbf{V}^{\prime }$ where $\ \mathbf{%
U}$ is the $L\times L$ matrix whose columns $\mathbf{u}_{k}$ are the $%
L\times 1$ eigenvectors of the second moment matrix $\mathbf{S=XX}^{\prime }$%
, $\mathbf{D}=diag(\tau _{1},...,\tau _{L})$, $\tau _{1}\geq ...\geq \tau
_{L}\geq 0$, are the eigenvalues of $\mathbf{S}$ and $\mathbf{V}$ is the $%
N\times L$ matrix whose $L$ columns $\mathbf{v}_{k}$ are the $N\times 1$ 
eigenvectors of $\mathbf{X}^{\prime }\mathbf{X}$ associated to nonzero
eigenvalues. This decomposition allows to write $\mathbf{X}$ as the sum of
the so-called elementary matrices $\mathbf{X}_{k}$ of rank 1,%
\begin{equation*}
\mathbf{X=}\sum_{k=1}^{r}\mathbf{X}_{k}=\sum_{k=1}^{r}\mathbf{u}_{k}\mathbf{w%
}_{k}^{\prime },
\end{equation*}%
where $\mathbf{w}_{k}=\mathbf{X}^{\prime }\mathbf{u}_{k}=\sqrt{\tau _{k}}%
\mathbf{v}_{k}$, being $\sqrt{\tau _{k}}$ the singular values of the $\mathbf{%
X}$ matrix, and $r=\max_{\tau _{k}>0}\{k\}$=rank($\mathbf{X}$).

\textbf{3rd step: Grouping}

Under the assumption of weak separability given in \cite{Golyandina_others_2001},
we group the elementary matrices $\mathbf{X}_{k}$ into $G$ disjoint groups
summing up the matrices within each group. Let $I_{j}, j=1,...,G$ be each disjoint group of indexes associated to the
corresponding eigenvectors. The matrix $\mathbf{X}_{I_{j}}=\sum_{k\in I_j} \mathbf{X}%
_k$ is associated to the $I_{j}$ group. \ The
decomposition of the trajectory matrix into these groups is given by $%
\mathbf{X}=\mathbf{X}_{I_{1}}+...+\mathbf{X}_{I_{G}}.$ The contribution of
the component coming from matrix $\mathbf{X}_{I_{j}}$ is given by $%
\sum_{k\in _{I_{j}}}\tau _{k}/\sum_{k=1}^{r}\tau _{k}.$

\textbf{4th step: Reconstruction}

Let $\mathbf{X}_{I_{j}}=(\widetilde{x}_{ij})$. In this step, each matrix $%
\mathbf{X}_{I_{j}}$ is transformed into a new time series of the same
length $T$ as the original one, denoted as $\widetilde{\mathbf{x}}^{(j)}=(%
\widetilde{x}_{1}^{(j)},...,\widetilde{x}_{T}^{(j)})^{\prime }$ by diagonal
averaging. This is equivalent to averaging the elements of $\mathbf{X}_{I_{j}}$ over its antidiagonals as
follows%
\begin{equation*}
\widetilde{x}_{t}^{(j)}=\left\{ 
\begin{array}{l}
\frac{1}{t}\sum_{i=1}^{t}\widetilde{x}_{i,t-i+1},\qquad 1\leq t<L \\ 
\frac{1}{L}\sum_{i=1}^{L}\widetilde{x}_{i,t-i+1},\qquad L\leq t\leq N \\ 
\frac{1}{T-t+1}\sum_{i=L-N+1}^{T-N+1}\widetilde{x}_{i,t-i+1},\qquad N<t\leq T%
\end{array}%
\right .
\end{equation*}

\bigskip

The alternative Toeplitz SSA or Vautard-Ghil (VG) relies on the assumption
that $\mathbf{x}$ is stationary and zero mean and it performs the orthogonal diagonalization in step 2 from an alternative matrix $\mathbf{S}_{T}\mathbf{=(}s_{ij}\mathbf{)}$ where
\begin{equation}
s_{ij}\mathbf{=}\frac{1}{T-|i-j|}\sum_{m=1}^{T-|i-j|}x_{m}x_{m+|i-j|},\hspace{1cm}1\leq i,j\leq L\:. \label{vg}
\end{equation}
In this case, the matrix $\mathbf{S}_{T}$ is the sample lagged
variance-covariance matrix of the original series, a symmetric Toeplitz
matrix. The set $(\tau _{k},\mathbf{u}_{k},\mathbf{w}_{k})$ is named the $k$%
-th eigentriple. The rest of the algorithm remains unchanged.

\section{Circulant SSA}

SSA in any of its variants requires to identify the harmonic frequencies of the extracted components and this makes necessary the analysis of the periodogram. To try to automate SSA, several strategies have been proposed such as find the correlations at different lags between the elements of two eigenvectors, associated to almost identical eigenvalues to test if they are in quadrature \cite{Ghil_Mo_1991}; effect a test based on the periodogram to establish if a pair of eigenvectors are associated to the same harmonic \cite{Vautard_others_1992}; introduce optimal thresholds for grouping eigenvectors linked to nearby frequencies in order to assign them to the same harmonic \cite{Alexandrov_Golyandina_2005, Alexandrov_Golyandina_2004}; perform a spectral-based Fisher $g$ test to asses certain principal components to the business cycle frequency \cite{Carvalho_Rua_2017}; considering eigenvectors as filters \cite{Kume_2013} group the outputs according to their frequency reponse \cite{Tome_others_2018}; and even apply cluster techniques for grouping the elementary components based on k-means \cite{Alonso_Salgado_2008} or hierarchical clustering \cite{Bilancia_Campobasso_2010}. Nevertheless, whatever procedure is used, the grouping of frequencies is made after the elementary components are extracted. Since the pairs of eigenvalues and eigenvectors are obtained, not as a function of the frequency, but rather on a decreasing magnitude, this means that the grouping is done with uncertainty. A partial solution is provided by linking the eigenvalues-eigenvectors as a function of the frequency for symmetric positive definite Toeplitz matrices \cite{Bozzo_others_2010}. However, the analytic form of the eigenvalues for this type of matrices is only known for heptadiagonal matrices \cite{Solary_2013}. 
We generalize the link between the eigenstructure of a matrix and the associated frequencies by the use of circulant matrices allowing non-periodic signals.

In this section, we propose an automated version of SSA based on circulant
matrices. First, we deal with the stationary case and, later on, we will
extend our proposal to the nonstationary case.

\subsection{Stationary case}

In this subsection we propose to apply SSA to an alternative matrix of
second moments that is circulant. In this case, we have closed solutions form
eigenvalues-eigenvectors that are linked to the desirable specific
frequencies. We show the asymptotic equivalence between the traditional
Toeplitz matrices used in SSA and our proposed circulant matrices.
Based on all the previous results we propose a new alghorithm that we name
Circulant SSA (CiSSA).

Toeplitz matrices appear when considering the population second order
moments of the trajectory matrix. Let $\{x_{t}\}$ be an infinite, zero mean
stationary time series whose autocovariances are given by $\gamma
_{m}=E(x_{t}x_{t-m})$, $m=0,1,...$ and its spectral density function, a real continuous and $2\pi $-periodic function, denoted by $f.$ Let

\begin{equation}
\mathbf{\Gamma }_{L}(f)=\left( 
\begin{array}{ccccc}
\gamma _{0} & \gamma _{1} & \gamma _{2} & ... & \gamma _{L-1} \\ 
\gamma _{1} & \gamma _{0} & \gamma _{1} & ... & \gamma _{L-2} \\ 
\vdots & \vdots & \vdots & \vdots & \vdots \\ 
\gamma _{L-1} & \gamma _{L-2} & \gamma _{L-3} & ... & \gamma _{0}%
\end{array}%
\right) \label{Gamma_L}
\end{equation}%
be the $L\times L$ matrix that collects these second moments. Notice that $%
\mathbf{\Gamma }_{L}(f)$ is a symmetric Toeplitz matrix that depends on the
spectral density $f$ through the covariances $\gamma _{m}$. Recall that \ $%
\gamma _{m}=\int_{0}^{1}f(w)\exp (i2\pi mw)dw$ for any integer $m$ where $w \in \left[ 0,\ 1 \right]$
is the frequency in cycles per unit of time.

Analytic expressions for the eigenvalues of Toeplitz matrices are only known up to heptadiagonal matrices. To be able to have closed solutions of the eigenvalues and eigenvectors for any dimension, we use a special case of Toeplitz matrices that are the circulant ones. In a circulant matrix every row is a right cyclic shift of the row above as follows:

\begin{equation*}
\mathbf{C}_{L}(f)=\left( 
\begin{array}{ccccc}
c_{0} & c_{1} & c_{2} & ... & c_{L-1} \\ 
c_{L-1} & c_{0} & c_{1} & ... & c_{L-2} \\ 
\vdots & \vdots & \vdots & \vdots & \vdots \\ 
c_{1} & c_{2} & c_{3} & ... & c_{0}%
\end{array}%
\right) .
\end{equation*}

The eigenvalues and eigenvectors of a circulant matrix have a closed form \cite{Lancaster_1969}. The $k$-th eigenvalue of the $L\times L$ circulant matrix $\mathbf{C}_{L}(f)$ is given by
\begin{equation*}
\lambda _{L,k}=\sum_{m=0}^{L-1}c_{m}\exp \left( i2\pi m\frac{k-1}{L}\right)
\end{equation*}
for $k=1,...,L$ and its associated eigenvector can be written as
\begin{equation}
\mathbf{u}_{k}=L^{-1/2}(u_{k,1,}...,u_{k,L})^{\prime } \label{u_k}
\end{equation}
where $u_{k,j}=\exp \left( -i2\pi (j-1)\frac{k-1}{L}\right) $.

In particular, if we consider the circulant matrix of order $L\times L$ with
elements $c_{m}$ defined as:
\begin{equation}
c_{m}=\frac{1}{L}\sum_{j=0}^{L-1}f\left( \frac{j}{L}\right) \exp \left(
i2\pi m\frac{j}{L}\right) ,\qquad m=0,1,...,L-1,
\end{equation}%
we have two interesting results \cite{Gray_1972}. First, the eigenvalues of this
circulant matrix coincide with the spectral density evaluated at points $w_k=\frac{k-1}{L}$, 
\begin{equation}
\lambda _{L,k}=f\left( \frac{k-1}{L}\right) . \label{auto}
\end{equation}%
And, second, the matrices $\mathbf{\Gamma }_{L}(f)$ and $\mathbf{C}_{L}(f)$
are asymptotically equivalent as $L\rightarrow \infty $, $\mathbf{\Gamma }%
_{L}(f)$ $\sim $ $\mathbf{C}_{L}(f),$ in the sense that both matrices have
bounded eigenvalues \cite{Tilli_1998} and $\underset{L\rightarrow \infty }{\lim }\frac{%
\left\Vert \mathbf{\Gamma }_{L}(f)-\mathbf{C}_{L}(f)\right\Vert _{F}}{\sqrt{L%
}}=0$, where $\left\Vert \text{\textperiodcentered }\right\Vert _{F}$ is the
Frobenius norm. Moreover, the eigenvalues of both
matrices $\mathbf{\Gamma }_{L}(f)$ and $\mathbf{C}_{L}(f)$ are
asymptotically equally distributed in the sense of Weyl\footnote{%
Two sets of bounded real numbers $\left\{ a_{n,k}\right\} _{k=1}^{n}$ and $%
\left\{ b_{n,k}\right\} _{k=1}^{n}$are asymptotically equally distributed in
the sense of Weyl if\textit{\ }for a given continuous function $F$ on the
interval $\left[ -K,K\right] $, it holds that $\underset{n\rightarrow \infty 
}{\lim }\frac{\underset{k=1}{\overset{n}{\sum }}\left(
F(a_{n,k})-F(b_{n,k})\right) }{n}=0.$} as a consequence of the fundamental
theorem of Szeg\"{o} \cite[p. 64]{Grenander_Szego_1958} as it is shown in \cite{Trench_2003}.

To obtain a more operational version of the procedure, we consider the circulant matrix ${\mathbf{C}}_{L}(\widetilde{f})$ whose elements $\widetilde{c}_{m}$ are given by \cite{Pearl_1973}:

\begin{equation}
\widetilde{c}_{m}=\frac{L-m}{L}\gamma _{m}+\frac{m}{L}\gamma _{L-m},\quad
m=0,1,...,L-1\:, \label{ctilde_m}
\end{equation}
where the generating function $\widetilde{f}$ is an approximation of the spectral density $f$. Besides that, \cite{Pearl_1973} shows that $\mathbf{\Gamma }_{L}(f)$ is
asymptotically equivalent to ${\mathbf{C}}_{L}(\widetilde{f}).$ By the
transitivity property, the three matrices $\mathbf{\Gamma }_{L}(f)$, $
{\mathbf{C}}_{L}(\widetilde{f})$ and $\mathbf{C}_{L}(f)$ are asymptotically
equivalent.

Therefore, our proposal will consist on using the eigenstructure of a
circulant matrix ${\mathbf{C}}_{L}(\widetilde{f})$ with elements given by (\ref
{ctilde_m}) and, by (\ref{auto}) associate the $k^{th}$ eigenvalue and corresponding eigenvector to the
frequency $w_k=\frac{k-1}{L}$. Moreover, again by (\ref{auto}) the spectral
density is easily evaluated at frequencies $w_k$ by the eigenvalues of the matrix $
{\mathbf{C}}_{L}(\widetilde{f})$.

Finally, going to the sample we have to work with estimated, rather than
population, quantities. So, we substitute the population autocovariances $%
\left\{ \gamma _{m}\right\} _{m=0}^{L-1}$, by the sample second moments $%
\left\{\widehat{\gamma }_{m}\right\} _{m=0}^{L-1}$ where $\widehat{\gamma }_{m},m=0,...,L-1$ is defined as 
\begin{equation*}
\widehat{\gamma }_{m}=\frac{1}{T-m}\sum_{t=1}^{T-m}x_{t}x_{t+m}\:.
\end{equation*}%
Since the sample autocovariances converge in probability to the population
autocovariances, we define $\mathbf{S}_{C}$ with elements given by 
\begin{equation}
\widehat{c}_{m}=\frac{L-m}{L}\widehat{\gamma }_{m}+\frac{m}{L}\widehat{\gamma }_{L-m},\qquad m=0,1,...,L-1\:. \label{c_m}
\end{equation}

In what follows, we describe our new proposed algorithm, named Circulant
SSA. Given the time series data $\left\{ x_{t}\right\} _{t=1}^{T}$:

\textbf{1st step: Embedding}. This step is as before.

\textbf{2nd step: Decomposition. }Compute the circulant matrix $\mathbf{S}_{C}$ whose
elements are given in (\ref{c_m}). Find the eigenvalues $\widehat{\lambda }%
_{k}$ of $\mathbf{S}_{C}$ and based on (\ref{auto}), associate the $k$-th eigenvalue and corresponding eigenvector
to the frequency $w_k=\frac{k-1}{L},k=1,...,L.$

\textbf{3rd step: Grouping. } Given the symmetry of the spectral density,
we have that $\widehat{\lambda }_{k}$ $=$ $\widehat{\lambda }_{L+2-k}$.
Their corresponding eigenvectors given by (\ref{u_k}) are complex,
therefore, they are conjugated complex by pairs, $\mathbf{u}_{k}=\overline{%
\mathbf{u}}_{L+2-k}$ where $\overline{\mathbf{v}}$ indicates the complex
conjugate of a vector $\mathbf{v}$, and $\mathbf{u}_{k}^{\prime }%
\mathbf{X}$ and $\mathbf{u}_{L+2-k}^{\prime }\mathbf{X}$\
correspond to the same harmonic period. We proceed as follows to transform
them in pairs of real eigenvectors in order to compute the associated
components.

To form the elementary matrices we first form the groups of 2 elements $%
B_{k}=\{k,L+2-k\}$ for $k=2,...,M$ with $B_{1}=\{1\}$ and $B_{\frac{L}{2}%
+1}=\left\{ \frac{L}{2}+1\right\} $ if $L$ is even. Second, we compute the
elementary matrix by frequency $\mathbf{X}_{B_{k}}$ as the sum of the two
elementary matrices $\mathbf{X}_{k}$ and $\mathbf{X}_{L+2-k}$, associated to
eigenvalues $\widehat{\lambda }_{k}$ and $\widehat{\lambda }_{L+2-k}$ \ and
frequency $w_k=\frac{k-1}{L}$,%
\begin{eqnarray*}
\mathbf{X}_{B_{k}} &=&\mathbf{X}_{k}+\mathbf{X}_{L+2-k} \\
&=&\mathbf{u}_{k}\overline{\mathbf{u}}_{k}^{\prime }\mathbf{X+u}_{L+2-k}%
\overline{\mathbf{u}}_{L+2-k}^{\prime }\mathbf{X} \\
&=&(\mathbf{u}_{k}\overline{\mathbf{u}}_{k}^{\prime }+\overline{\mathbf{u}}%
_{k}\mathbf{u}_{k}^{\prime })\mathbf{X} \\
&=&2(R_{\mathbf{u}_{k}}R_{\mathbf{u}_{k}}^{^{\prime }}+I_{\mathbf{u}_{k}}I_{%
\mathbf{u}_{k}}^{^{\prime }})\mathbf{X}
\end{eqnarray*}%
where $R_{\mathbf{u}_{k}}\,\ $denotes\ the real
part of $\mathbf{u}_{k}$ and $I_{\mathbf{u}_{k}}\,\ $its imaginary part.
Notice that the matrices $\mathbf{X}_{B_{k}},k=1,...,L,$ are real. 

\textbf{4th step: Reconstruction}. As before.

Notice that the elementary reconstructed series by frequency can be
automatically assigned to a component according to the goal of our analysis.

\subsection{Asymptotic equivalence of Basic, Toeplitz and Circulant SSA}
Toeplitz and Circulant SSA are modifications of the original Basic SSA. In this section, we will prove that the three vresions of SSA (Basic, Toeplitz and Circulant) are asymptotically equivalent according to the definition given in \cite{Gray_1972}. Later on, we will run some simulations to compare the performance of the three versions in finite samples.
\begin{theorem}
Given the $L \times N$ trajectory matrix $\mathbf{X}$ defined in (\ref{trajectory}), let $\mathbf{S}_B=\mathbf{XX'}/N$, $\mathbf{S}_{T}$ the Toeplitz matrix with elements defined by (\ref{vg}) and $\mathbf{S}_C$ the circulant matrix with elements given in (\ref{c_m}). Consider the sequence of matrices $\left\{ \mathbf{S}_B\right\} ,\left\{\mathbf{S}_{T}\right\} $ and $\left\{ \mathbf{
S}_{C}\right\} $ as $L \longrightarrow \infty$. Then $\mathbf{S}_{B}\sim
\mathbf{S}_{T}\sim \mathbf{S}_{C}$.
\end{theorem}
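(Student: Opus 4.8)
The plan is to reduce the equivalence of the three \emph{sample} matrices to the already-established equivalence of the corresponding \emph{population} matrices. Recall that asymptotic equivalence in the sense of Gray, written $\mathbf{A}\sim\mathbf{B}$, requires two things: the sequences $\left\{\mathbf{A}\right\}$, $\left\{\mathbf{B}\right\}$ have uniformly bounded eigenvalues, and the normalized Frobenius distance $\left\Vert \mathbf{A}-\mathbf{B}\right\Vert_{F}/\sqrt{L}$ tends to zero as $L\rightarrow\infty$; moreover $\sim$ is transitive. The preceding discussion already gives $\mathbf{\Gamma}_{L}(f)\sim\mathbf{C}_{L}(f)$ \cite{Gray_1972} and $\mathbf{\Gamma}_{L}(f)\sim{\mathbf{C}}_{L}(\widetilde{f})$ \cite{Pearl_1973}. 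Hence it suffices to bridge each sample matrix to its natural population counterpart, namely to show
\begin{equation*}
\mathbf{S}_{T}\sim\mathbf{\Gamma}_{L}(f),\qquad \mathbf{S}_{B}\sim\mathbf{\Gamma}_{L}(f),\qquad \mathbf{S}_{C}\sim{\mathbf{C}}_{L}(\widetilde{f}),
\end{equation*}
after which transitivity yields $\mathbf{S}_{B}\sim\mathbf{\Gamma}_{L}(f)\sim\mathbf{S}_{T}$ and $\mathbf{S}_{C}\sim{\mathbf{C}}_{L}(\widetilde{f})\sim\mathbf{\Gamma}_{L}(f)\sim\mathbf{S}_{T}$, that is, $\mathbf{S}_{B}\sim\mathbf{S}_{T}\sim\mathbf{S}_{C}$.

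For the boundedness requirement I would argue that the population matrices inherit bounded eigenvalues from the boundedness of the spectral density $f$ (continuous and $2\pi$-periodic, hence bounded), as already noted for $\mathbf{\Gamma}_{L}(f)$ and $\mathbf{C}_{L}(f)$; the sample matrices then have eigenvalues that are bounded in probability because the sample autocovariances $\widehat{\gamma}_{m}$ converge in probability to the $\gamma_{m}$, so the associated sample spectral quantities stay bounded. This step is routine given the convergence already invoked in the text.

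The core of the proof is the Frobenius condition, which I would handle by exploiting the Toeplitz/circulant structure to collapse the matrix norm into a weighted sum over lags. For $\mathbf{S}_{T}-\mathbf{\Gamma}_{L}(f)$, whose entry on the $m$-th diagonal is the scalar error $e_{m}$ (the lag-$m$ Vautard--Ghil estimator minus $\gamma_{m}$), repeated $2(L-m)$ times for $m\geq 1$ and $L$ times for $m=0$, we get
\begin{equation*}
\frac{\left\Vert \mathbf{S}_{T}-\mathbf{\Gamma}_{L}(f)\right\Vert_{F}^{2}}{L}=e_{0}^{2}+\frac{2}{L}\sum_{m=1}^{L-1}(L-m)\,e_{m}^{2}\leq e_{0}^{2}+2\sum_{m=1}^{L-1}e_{m}^{2},
\end{equation*}
so it is enough that the accumulated squared estimation error vanish. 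For $\mathbf{S}_{C}-{\mathbf{C}}_{L}(\widetilde{f})$ the entry error is $\widehat{c}_{m}-\widetilde{c}_{m}$, a convex combination of the $\widehat{\gamma}-\gamma$ errors by (\ref{ctilde_m})--(\ref{c_m}), and the same circulant counting gives an identical bound. For $\mathbf{S}_{B}-\mathbf{\Gamma}_{L}(f)$ there is the extra wrinkle that $\mathbf{S}_{B}=\mathbf{XX}^{\prime}/N$ is only approximately Toeplitz, since its $(i,j)$ entry averages just the $N$ products sharing the window rather than all $T-|i-j|$; I would write each such entry as $\gamma_{|i-j|}$ plus a mean-zero deviation whose second moment is $O(1/N)$ uniformly in the entry, and bound the Frobenius norm by the same per-lag sum.

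The main obstacle is the control of the double limit $L,T\rightarrow\infty$ with $L<T/2$: the normalization is by $\sqrt{L}$, yet the number of distinct lags entering the sum grows with $L$, so one must ensure $\sum_{m=1}^{L-1}e_{m}^{2}\rightarrow 0$ rather than merely $e_{m}\rightarrow 0$ for each fixed $m$. This requires either a uniform bound $\sup_{m}E[e_{m}^{2}]=O(1/(T-L))$ combined with a rate linking $L$ and $T$, or summability of the autocovariances to tame the weights $(L-m)/L$; everything else, the boundedness and the structural reduction, is routine given the cited equivalences.
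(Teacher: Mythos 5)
Your plan follows essentially the same architecture as the paper's proof: reduce each sample matrix to the population Toeplitz matrix $\mathbf{\Gamma}_{L}(f)$ by collapsing the normalized Frobenius distance into a sum of per-lag squared errors, then close with transitivity and the already-cited population equivalences. Two structural differences are worth noting. For the circulant leg, the paper does not prove $\mathbf{S}_{C}\sim\mathbf{C}_{L}(\widetilde{f})$ at all: it observes that $\mathbf{S}_{C}$ is obtained from $\mathbf{S}_{T}$ by applying Pearl's construction (\ref{ctilde_m}) to the \emph{sample} autocovariances, so $\mathbf{S}_{T}\sim\mathbf{S}_{C}$ holds by construction, and only then invokes transitivity. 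Your detour through $\mathbf{C}_{L}(\widetilde{f})$ would also work, since $\widehat{c}_{m}-\widetilde{c}_{m}$ is the stated convex combination of the lag errors, but it is an extra estimate the paper avoids. For $\mathbf{S}_{B}$, which as you correctly note is not Toeplitz, the paper bounds each lag's contribution by $\max\{(\widetilde{s}_{ij}-\gamma_{m})^{2}:|i-j|=m\}$ and proves a dedicated lemma that the sum of these maxima over $m=0,\dots,L-1$ vanishes in probability; this plays the role of your entrywise $O(1/N)$ variance bound.

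The one genuine gap is that the step you explicitly defer --- showing $\sum_{m=0}^{L-1}(\widehat{\gamma}_{m}-\gamma_{m})^{2}\rightarrow 0$ in probability as $L\rightarrow\infty$, and its analogue for the $\widetilde{s}_{ij}$ --- is precisely the content of the paper's two auxiliary lemmas, i.e.\ it is the entire analytic substance of the theorem rather than a technicality to be filled in later. The paper discharges it by the second of your two suggested routes: expand $\sum_{m}(\widehat{\gamma}_{m}-\gamma_{m})^{2}=\sum_{m}\gamma_{m}^{2}+\sum_{m}\widehat{\gamma}_{m}^{2}-2\sum_{m}\gamma_{m}\widehat{\gamma}_{m}$, use square-summability of the autocovariances (Parseval, $\sum_{m=0}^{\infty}\gamma_{m}^{2}=K<\infty$) for the first term, and argue that the other two sums converge in probability to $K$ and $2K$. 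Your first suggested route would not suffice as stated: a uniform bound $\sup_{m}E[e_{m}^{2}]=O(1/(T-L))$ only gives $\sum_{m=0}^{L-1}E[e_{m}^{2}]=O(L/(T-L))$, which under the paper's standing assumption $L<T/2$ is merely $O(1)$, so you would need the extra rate condition $L=o(T)$ that the paper does not impose. If you pursue the Parseval route, be aware that passing from termwise convergence $\widehat{\gamma}_{m}\rightarrow\gamma_{m}$ to convergence of the growing sum $\sum_{m=0}^{L-1}\widehat{\gamma}_{m}^{2}$ is itself an interchange of limits that needs justification (uniform integrability or a dominated-convergence-type argument); the paper asserts it rather than proving it, so this is the point where any rigorous version of the argument must do real work.
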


\begin{proof}
The proof is given in the appendix
\end{proof}

This theorem gives the basis to understand the similar results obtained in practice between Basic and Toeplitz SSA when the window length is very large (the larger, the better as the result is asymptotically). This was empirically shown using stationary time series in climate and geophisics \cite{Allen_Smith_1996, Ghil_others_2002}. Here, we provide a theoretical basis for these empirical findings. Additionally, we also extend the result for the new version of SSA that we have introduced in this paper, CiSSA.

\subsection{Nonstationary case}

In economics, many time series are nonstationary. That is to say, the
spectral density function has discontinuities. This has important
consequences in our analysis and we have to show that Circulant SSA can be
applied to nonstationary time series. The next theorem, a generalization
of the analogous Gray's theorem \cite[Theorem 3]{Gray_1974}, provides the theoretical background
needed to apply CiSSA to nonstationary time series.

\begin{theorem}
Let $\mathbf{T}_{L}(s)$ be a sequence of Toeplitz matrices with $s(w)$ a
real, continuous and 2$\pi $-periodic, such that $s(w)\geq 0$, where the
equality is reached in a finite number of points $H=\{w_{i}^{0},i=1,...,l\}$%
. Given a finite $\delta $, consider the disjoint sets%
\begin{equation*}
\Omega _{i}=\left\{ w\in \left[ w_{i}^{0}-a_{i},w_{i}^{0}+a_{i}\right]
|s(w)\leq \frac{1}{\delta }\right\} ,a_{i}\in 
\mathbb{R}^{+},\:i=1,...,l
\end{equation*}%
and let $g(w)$ be a function defined as
\begin{equation*}
g\left( w \right)=\left\{
\begin{array}{*{35}{l}}
 f\left( w \right)=\tfrac{1}{s\left( w \right)} & \text{if }w\notin \bigcup\nolimits_{i=1}^{l}{{{\Omega }_{i}}} \\
{{h}_{i}}\left( w \right) & \text{if }w\in {{\Omega }_{i}} \\
\end{array}\right.
\end{equation*}
where $h_{i}(w)$ is any real valued bounded function continuous in $\Omega
_{i}$ and symmetric around $w_{i}^{0}$. Let $M_{h_{i}}=$ sup $h_{i}<\infty $ and $m_{h_{i}}=$ inf $%
h_{i}=h_{i}\left( w_{i}^{0}-a_{i}\right) =h_{i}\left( w_{i}^{0}+a_{i}\right)
=\delta .$

Let $\rho _{L,k},k=1,...,L,$ be the eigenvalues of $\left( \mathbf{T}%
_{L}(s)\right) ^{-1}$ sorted in decreasing order and let $F(x)$ be a
continuous function in $\left[ \frac{1}{M_{s}},\max_{i}M_{h_{i}}\right] $
with $M_{s}=$ sup $s$, then%
\begin{equation}
\lim_{L\rightarrow \infty }\frac{1}{L}\sum_{k=1}^{L}F(\min (\rho _{L,k},\max
(\widetilde{g}_{k},\delta )))=\int\limits_{0}^{1}F(g(w))dw, \label{teorema}
\end{equation}%
where $\widetilde{g}_{k}$ are the values of $g(\frac{k-1}{L})$ sorted in
descending order.
\end{theorem}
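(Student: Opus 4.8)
The plan is to reduce the statement to the fundamental theorem of Szeg\"o for the bounded, nonnegative symbol $s$ and then transfer the resulting equidistribution to the inverse matrix through the clipping built into the left-hand side. First I would dispose of the right-hand side. Since each $h_i$ is bounded and continuous and matches $f=1/s$ at the endpoints of $\Omega_i$ (both equal $\delta$ there), the function $g$ is bounded and continuous on $[0,1]$ with range inside $[1/M_s,\max_i M_{h_i}]$, which is exactly the interval on which $F$ is assumed continuous. Hence $F\circ g$ is Riemann integrable, and because reordering finitely many summands changes nothing, $\int_0^1 F(g(w))\,dw=\lim_{L\to\infty}\frac1L\sum_{k=1}^L F\big(g(\tfrac{k-1}{L})\big)=\lim_{L\to\infty}\frac1L\sum_{k=1}^L F(\widetilde g_k)$. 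The whole problem is thus to show that the sorted sequences $\{\min(\rho_{L,k},\max(\widetilde g_k,\delta))\}_k$ and $\{\widetilde g_k\}_k$ are asymptotically equally distributed in the sense of Weyl.

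Next I would bring in Szeg\"o. The matrix $\mathbf{T}_L(s)$ is Hermitian Toeplitz with bounded symbol $s$, so the fundamental theorem of Szeg\"o gives that its eigenvalues $\lambda_{L,k}(s)$ are asymptotically equally distributed with the samples $\{s(\tfrac{k-1}{L})\}$. Passing to the reciprocals $\rho_{L,k}=1/\lambda_{L,k}(s)$, sorted in decreasing order, yields a limiting counting distribution: for each level $t$, $\lim_L\frac1L\#\{k:\rho_{L,k}\ge t\}=|\{w:s(w)\le 1/t\}|=|\{w:f(w)\ge t\}|$, obtained by applying Szeg\"o to continuous functions approximating the indicator of $\{s\le 1/t\}$, which is legitimate since $s$ is continuous and the boundary level set has measure zero for all but countably many $t$. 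The only obstruction to inserting $\rho_{L,k}$ directly into $F$ is that $f$ is unbounded at the points $w_i^0$, so the extreme eigenvalues blow up and $F$ need not be defined on them; this is exactly what the clipping $\min(\cdot,\max(\widetilde g_k,\delta))$ repairs.

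The heart of the argument is a comparison of counting functions at each level $t$. Writing $A_L(t)=\#\{k:\min(\rho_{L,k},\max(\widetilde g_k,\delta))\ge t\}$, the fact that every sequence is sorted in decreasing order gives $A_L(t)=\#\{k:\rho_{L,k}\ge t\}$ for $t\le\delta$, and $A_L(t)=\min\big(\#\{k:\rho_{L,k}\ge t\},\#\{k:\widetilde g_k\ge t\}\big)$ for $t>\delta$. For $t\le\delta$ the superlevel sets $\{f\ge t\}$ and $\{g\ge t\}$ coincide (both contain all of $\bigcup_i\Omega_i$ and agree with $f$ outside it), so $\frac1L A_L(t)\to|\{g\ge t\}|$. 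For $t>\delta$ both superlevel sets lie inside $\bigcup_i\Omega_i$, and here I would invoke the hypotheses on $h_i$: symmetry about $w_i^0$, continuity, and the boundary identity $m_{h_i}=\delta=f(w_i^0\pm a_i)$ must force the superlevel set of $g=h_i$ to be dominated by that of the genuinely singular $f$, giving $\#\{k:\rho_{L,k}\ge t\}\ge\#\{k:\widetilde g_k\ge t\}$ asymptotically, so that the minimum is attained by the $g$-term. Consequently $\frac1L A_L(t)\to|\{g\ge t\}|$ at every continuity level.

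Finally, since each $a_{L,k}=\min(\rho_{L,k},\max(\widetilde g_k,\delta))$ lies in the compact interval $[1/M_s,\max_i M_{h_i}]$ on which $F$ is uniformly continuous and bounded, the convergence of the normalized counting functions $\frac1L A_L(t)$ to $|\{g\ge t\}|$ upgrades to convergence of the $F$-averages by a standard layer-cake/integration-by-parts estimate, delivering $\lim_L\frac1L\sum_k F(a_{L,k})=\int_0^1 F(g(w))\,dw$. I expect the genuine difficulty to sit entirely in the $t>\delta$ case of the counting comparison: showing that the clipped extreme eigenvalues near each singular frequency $w_i^0$ reproduce the prescribed bounded profile $h_i$ rather than leaving a deficit. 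This is where the generalization of Gray's theorem really bites, since it is the only place where the blow-up of $f$, the finiteness of $\delta$, and the endpoint and symmetry conditions on $h_i$ all have to be used at once; the stationary Szeg\"o theorem alone does not control these $o(L)$-many largest eigenvalues individually, so the matching has to be carried out at the level of distributions.
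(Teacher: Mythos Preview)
Your counting-function route is genuinely different from the paper's. The paper works entirely with the bounded modified symbol $g$: it observes that $g$ is real, continuous, $2\pi$-periodic and bounded away from zero on $[0,1]$, so that $g^{-1}$ is bounded and continuous; it then invokes the standard asymptotic equivalence $(\mathbf{T}_L(g^{-1}))^{-1}\sim\mathbf{T}_L(g)$ for Toeplitz matrices with such symbols, notes that $F(1/x)$ is continuous on the image of $g^{-1}$, and applies Szeg\"o's theorem directly in the version valid for continuous (not only Wiener-class) generating functions. No level-set comparison between the unbounded $f=1/s$ and the bounded $g$ is carried out; the passage from the eigenvalues $\rho_{L,k}$ of $(\mathbf{T}_L(s))^{-1}$ to the clipped expression on the left-hand side is not argued separately and is meant to be inherited from Gray's original proof, of which the present statement is a declared generalization.

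Your approach is more explicit about where the difficulty lies, but the step you yourself flag has a real gap. For $t>\delta$ you need $|\{w\in\Omega_i:h_i(w)\ge t\}|\le|\{w\in\Omega_i:f(w)\ge t\}|$ in order that the minimum in $A_L(t)$ be attained by the $g$-count, and you assert this is forced by the symmetry, continuity and endpoint conditions on $h_i$. It is not. Take a single zero at $w^0=0$ with $s(w)=|w|$, so $f(w)=1/|w|$, fix $\delta$ so that $\Omega=[-a,a]$ with $a=1/\delta$, and let $h$ be a symmetric continuous bump equal to a large constant $M_h$ on $[-a+\varepsilon,a-\varepsilon]$ and dropping linearly to $\delta$ at the endpoints. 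For levels $t$ with $\delta<t<M_h$ one has $|\{h\ge t\}|$ close to $2a$, whereas $|\{f\ge t\}|=|\{|w|\le 1/t\}|=2/t<2a$. In this regime your minimum is achieved by the $\rho$-count, the clipped sequence inherits the distribution of $\min(f,M_h)$ rather than that of $g$, and the layer-cake step yields $\int_0^1 F(\min(f(w),M_h))\,dw$, not $\int_0^1 F(g(w))\,dw$. Your argument would close under an additional domination hypothesis such as $h_i\le f$ on $\Omega_i$ (natural if $h_i$ is thought of as a bounded minorant of the singular profile), but the stated assumptions alone do not deliver the superlevel comparison you rely on.
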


\textbf{Proof}: The proof is given in the Appendix.

\bigskip

In a similar way to \cite{Gray_1974}, the theorem states that the sequence of
eigenvalues of the sequence of matrices $\left( \mathbf{T}_{L}(s)\right)
^{-1}$ are asymptotically equally distributed (in the sense of Weyl) as the
eigenvalues of the sequence of matrices $\mathbf{T}_{L}(g)$ up to a finite
value $\delta $ as $L$ tends to infinity. Moreover, the matrices $\mathbf{T}
_{L}(g)\sim \mathbf{C}_{L}(g)$ and, by Szeg\"{o}'s theorem, the eigenvalues
of the sequence of matrices $\mathbf{T}_{L}(g)$ are asymptotically equally
distributed as the eigenvalues of the sequence of matrices $\mathbf{C}
_{L}(g) $ up to a finite value $\delta $ as $L$ tends to infinity.

As a result, for a nonstationary series, the union of the estimation of the
pseudo-spectral density in a point of discontinuity with the estimations in the
adjoint frequencies through segments is an easy way of building the
functions $h_{i}.$ If all the functions $h_{i}$ are constant and equal to a
particular value $\delta $ finite, we have the particular case proved in
\cite[Theorem 3]{Gray_1974}. Therefore, the generalization to functions $h_{i}$ allows a
better approximation of the pseudo-spectral density when we increase the window
length.

\section{Simulations}

In this section we check the performance of our new proposal, Circulant SSA,
in finite samples and compare it with the competing SSA algorithms, i.e.
Basic SSA and Toeplitz SSA for a linear as well as a nonlinear time series
model. 
Even though SSA is nonparametric and, therefore, model free in this section we generate time series following a known model and check the basic statistical properties related to the signal extraction procedure. In particular, we check if the extracted signals are unbiased. 
These simulations generalize previous exercises \cite{Golyandina_2019} by including CiSSA, but also using more complex time series models in a linear and non-linear framework. 

\subsection{Linear time series}

The first model is a basic structural time series model%
\begin{equation}
x_{t}=T_{t}+c_{t}+s_{t}+e_{t} \label{gen}
\end{equation}%
where $T_{t}$ is the trend component, $c_{t}$ is the cycle, $s_{t}$ is the
seasonal component and $e_{t}$ is the irregular component. We assume an
integrated random walk for the trend \cite{Young_1984} given by 
\begin{eqnarray}
T_{t} &=&T_{t-1}+\beta _{t-1} \label{trend} \\
\beta _{t} &=&\beta _{t-1}+\eta _{t} \notag
\end{eqnarray}%
with $\eta _{t}\sim N(0,\sigma _{\eta }^{2}).$ The cyclical and seasonal
components are specified according to \cite{Durbin_Koopman_2012}, where the
cycle is given by the first component of the bivariate VAR(1) 
\begin{equation}
\left( 
\begin{array}{c}
c_{t} \\ 
\widetilde{c}_{t}%
\end{array}%
\right) =\rho _{c}\left( 
\begin{array}{cc}
\cos (2\pi w_{c}) & \sin (2\pi w_{c}) \\ 
-\sin (2\pi w_{c}) & \cos (2\pi w_{c})%
\end{array}%
\right) \left( 
\begin{array}{c}
c_{t-1} \\ 
\widetilde{c}_{t-1}%
\end{array}%
\right) +\left( 
\begin{array}{c}
\varepsilon _{t} \\ 
\widetilde{\varepsilon }_{t}%
\end{array}%
\right)  \label{cycle}
\end{equation}%
with $\left( 
\begin{array}{c}
\varepsilon _{t} \\ 
\widetilde{\varepsilon }_{t}%
\end{array}%
\right) \sim N(\mathbf{0},\sigma _{\varepsilon }^{2}I)$ and $\frac{1}{w_{c}}$
the period, $w_{c}\in \lbrack 0,1]$. And, the seasonal component is given by%
\begin{equation}
s_{t}=\sum_{j=1}^{[s/2]}a_{j,t}\cos (2\pi w_{j}t)+b_{j,t}\cos (2\pi w_{j}t)
\label{seasonal}
\end{equation}%
with $w_{j}=\frac{j}{s},j=1,...,[s/2]$ and $s$ the seasonal period, where $[$%
\textperiodcentered $]$ is the integer part and $a_{j,t}$ and $b_{j,t}$ are
two independent random walks with noise variances equal to $\sigma _{j}^{2}.$
Finally, the irregular component is white noise with variance $\sigma
_{e}^{2}.$ All the components are independent of each other. We set $\rho
_{c}=1,$ so the trend, cycle and seasonal components have a unit root. We
consider that the series are monthly with $s=12$ and cyclical period equal
to $\frac{1}{w_{c}}=48$ months. The sample size is $T=193$ and the noise
variances of the different components are given by $\sigma _{\eta
}^{2}=0.0006^{2}$, $\sigma _{j}^{2}=0.004^{2}$, $\sigma _{\varepsilon
}^{2}=0.008^{2}$ and $\sigma _{e}^{2}=0.06^{2}.$ We choose as window length $L=48$ because this value of $L$ is multiple of the seasonal period, it is equal to the cyclical period and $T-1$ is
multiple of $L$ \cite{Golyandina_Zhigljavsky_2013}.

The trend is related to frequency 0, the cycle to frequency 1/48 and the
seasonal components to frequencies 1/12, 1/6, 1/3, 1/4, 5/12 and 1/2. Given (%
\ref{auto}), we can recover the signal associated to a frequency $w=\frac{k-1%
}{L}$ by using the elementary components associated to eigenvalues $k$
and $k^{^{\prime }}=L+2-k,$ the latter by the symmetry of the spectral
density. Therefore, the trend is reconstructed with eigentriple 1, the
cyclical component with eigentriples\emph{\ }2 and 48, and the seasonal
components with eigentriples 5, 9, 13, 17, 21, 25, 29, 33, 37, 41 and 45.
For example, for the frequency $w=\frac{1}{12}$, we have that $\frac{k-1}{L}=%
\frac{1}{12}$, and therefore, we sum the elementary components $k=\frac{48}{%
12}+1=5$ and $k^{^{\prime }}=L+2-k=48+2-5=45$.

If the procedure for signal extraction works well, the simulated component $%
y_{t}$ ($y_{t}~$\ can be the trend, cycle or seasonal component) could be
written as%
\begin{equation*}
y_{t}=\widehat{y}_{t}+u_{t}
\end{equation*}%
where $u_{t}$ is the noise and $\widehat{y}_{t}$ is the extracted signal.
Then, in the regression 
\begin{equation}
y_{t}=a+b\widehat{y}_{t}+u_{t} \label{reg}
\end{equation}%
$a=0$ (unbiasedness) and $b=1\,$\ (the scale is not changed). Notice that $%
y_{t}$ and $\widehat{y}_{t}$ should be cointegrated. We simulate 10000 times
the model and perform signal extraction with Circulant SSA. Table \ref{Tabla_1} shows
the percentiles of the empirical distribution of the estimated coefficients
of the regression in (\ref{reg}).

\begin{table}[p]
\begin{center}
\begin{spacing}{1.5}
\begin{tabular}{c | c | c c c c c}
\hline
\multirow{2}{*}{\textbf{Statistic}} & \multirow{2}{*}{\textbf{Component}} & \multicolumn{5}{c}{\textbf{Quantiles}} \\
& & 5 & 25 & 50 & 75 & 95 \\ \hline
\multicolumn{7}{c}{\textbf{Circulant SSA}} \\ \hline
\multirow{3}{*}{$\hat{a}$} & Trend & -0.0613 & -0.0209 & -0.0006 & 0.0194 & 0.0600 \\
& Cycle & -0.0109 & -0.0043 & 0.0000 & 0.0045 & 0.0108 \\
& Seasonal & -0.0015 & -0.0006 & 0.0000 & 0.0006 & 0.0015 \\ \hline
\multirow{3}{*}{$\hat{b}$} & Trend & 0.9748 & 0.9951 & 1.0032 & 1.0143 & 1.0651 \\
& Cycle & 0.8481 & 0.9569 & 1.0029 & 1.0476 & 1.1340 \\
& Seasonal & 0.9451 & 0.9819 & 1.0049 & 1.0277 & 1.0630 \\ \hline
\multicolumn{7}{c}{\textbf{Basic SSA}} \\ \hline
\multirow{3}{*}{$\hat{a}$} & Trend & -0.0610 & -0.0206 & -0.0006 & 0.0191 & 0.0598 \\
& Cycle & -0.0165 & -0.0066 & 0.0001 & 0.0065 & 0.0167 \\
& Seasonal & -0.0033 & -0.0010 & 0.0000 & 0.0010 & 0.0033 \\ \hline
\multirow{3}{*}{$\hat{b}$} & Trend & 0.9881 & 1.0063 & 1.0153 & 1.0326 & 1.1292 \\
& Cycle & 0.7891 & 0.9618 & 1.0177 & 1.0794 & 1.2793 \\
& Seasonal & 0.9471 & 0.9911 & 1.0166 & 1.0431 & 1.0867 \\ \hline
\multicolumn{7}{c}{\textbf{Toeplitz SSA}} \\ \hline
\multirow{3}{*}{$\hat{a}$} & Trend & -0.0588 & -0.0203 & -0.0007 & 0.0186 & 0.0566 \\
& Cycle & -0.0178 & -0.0061 & 0.0001 & 0.0062 & 0.0170 \\
& Seasonal & -0.0017 & -0.0007 & 0.0000 & 0.0007 & 0.0018 \\ \hline
\multirow{3}{*}{$\hat{b}$} & Trend & 0.9820 & 1.0003 & 1.0088 & 1.0264 & 1.1415 \\
& Cycle & 0.7852 & 0.9863 & 1.0537 & 1.1310 & 1.2754 \\
& Seasonal & 0.9554 & 0.9982 & 1.0273 & 1.0605 & 1.1207 \\ \hline
\end{tabular}
\end{spacing}
\caption{Statistics related to the goodness of fit of the extracted signals for the different methods. Simulations for the linear model, N=10000. Columns show the quantiles of the empirical distribution of the estimated coefficients of the regression of the generated components over the estimated ones.}
\label{Tabla_1}
\end{center}
\end{table}

Table \ref{Tabla_1} shows that the median of the estimated intercept is almost zero for
the three estimated components (cycle, seasonal component and trend). The
median for the scale parameter $b$ is almost one for the three components,
but looking at the values for different quantiles, the empirical
distribution for the estimated $b$ associated to the cycle indicates a
larger dispersion.

The estimated residuals from equation (\ref{gen}) are given by $\widehat{e}%
_{t}=x_{t}-\widehat{T}_{t}-\widehat{c}_{t}-\widehat{s}_{t\text{ }}$, and
should be white noise, where $\widehat{T}_{t},\widehat{c}_{t},$ and $%
\widehat{s}_{t\text{ }}$are the estimates of the trend, cycle and seasonal
component respectively. In order to check this, we fit an AR(1) to $\widehat{%
e}_{t}$. Table \ref{Tabla_2} shows the quantiles of the empirical distributions of the
mean, standard error and autoregressive coefficient of the residuals of
the 10000 replications. The median of the mean and autoregressive
coefficient are close to zero. The median of the standard deviation is
0.0529 (the value used for the simulations was 0.06).

\begin{table}[h]
\begin{center}
\begin{spacing}{1.5}
\begin{tabular}{c | c c c c c}
\hline
\multirow{2}{*}{\textbf{Statistic}} & \multicolumn{5}{c}{\textbf{Quantiles}} \\
& 5 & 25 & 50 & 75 & 95 \\ \hline
Average & -0.0033 & -0.0012 & 0.0000 & 0.0011 & 0.0033 \\
Standard deviation & 0.0478 & 0.0508 & 0.0529 & 0.0551 & 0.0581 \\
AR(1) coefficient & -0.1693 & -0.0870 & -0.0313 & 0.0285 & 0.1075 \\ \hline
\end{tabular}
\end{spacing}
\caption{Statistics related to the residual term $\widehat{e}_{t}$ in Circulant SSA: Average, standard deviation and autoregressive coefficient of AR(1). Simulations for the linear model, N=10000.}
\label{Tabla_2}
\end{center}
\end{table}

The results from the simulations seem very good. In order to compare Circulant SSA with alternative algorithms as Basic and Toeplitz SSA we also simulate the linear model given by (\ref{gen}) and extract the trend, cycle and seasonal components for 10000 simulations. Basic and Toeplitz SSA require first to calculate the principal components and then to identify the frequency they represent with some procedure as stated in the first paragraph of this section. However, given that we are using simulated time series and we know beforehand the frequencies that might be more informative, we proceed in an heuristic way. According to model (\ref{gen}), we know that the informative frequencies are $\Omega =\left\{0,1/48,1/12,1/6,1/4,1/3,5/12,1/2\right\} $ and the window length $L=48$ coincides with the cycle periodicity and is multiple of the seasonal periodicity of a monthly time series. Also each eigenvector generates a linear subspace associated to a frequency. In this way, we calculate the periodogram for each eigenvector and obtain the frequency associated with the maximum. If that frequency belongs to the set $\Omega $, the associated component to that eigenvector is assigned to the trend, cycle or seasonal component and, on the contrary it is assigned to the residual $\widehat{e}_{t}$.

As for Circulant SSA we perform regressions as in (\ref{reg}) between
simulated an estimated components and check $a=0$ and $b=1.$Table \ref{Tabla_1} shows
the quantiles 10000 estimated values for $a$ and $b$. Results are very
similar for the three versions of SSA and it can be accepted that the
estimated values are close to $a=0$ and $b=1$. These simulations allow to
conclude that, at least for the proposed linear model, empirically the three versions
of SSA are equivalent. However, some differences can be found in the
estimation of the cycle, where the distribution of the estimates of $a$ \
and $b$ show less dispersion around $0$ and $1$ with CiSSA$.$

\subsection{Non-linear time series}

For the case of non-linear time series, we borrow the model from \cite{Durbin_Koopman_2012} for UK travellers given by
\begin{equation*}
x_{t}=T_{t}+c_{t}+\exp (a_{0}+a_{1}T_{t})\gamma _{t}+e _{t}
\end{equation*}%
where $T_{t}$ is the trend, $c_{t}$ is the cycle and $\gamma _{t}$ is the
seasonal component specified as in (\ref{trend}), (\ref{cycle}) and (\ref%
{seasonal}), respectively. The parameters $a_{0}$ and $a_{1}$ are unknown
fixed coefficients. Coefficient $a_{0}$ scales the seasonal component. The
sign of the coefficient $a_{1}$ determines whether the seasonal variation
increases or decreases when a positive change in the trend occurs. The
overall time varying amplitude of the seasonal component is determined by
the combination $a_{0}+a_{1}\mu _{t}.$

\begin{table}[p]
\begin{center}
\begin{spacing}{1.5}
\begin{tabular}{c | c | c c c c c}
\hline
\multirow{2}{*}{\textbf{Statistic}} & \multirow{2}{*}{\textbf{Component}} & \multicolumn{5}{c}{\textbf{Quantiles}} \\
& & 5 & 25 & 50 & 75 & 95 \\ \hline
\multicolumn{7}{c}{\textbf{Circulant SSA}} \\ \hline
\multirow{3}{*}{$\hat{a}$} & Trend & -0.0603 & -0.0199 & 0.0004 & 0.0202 & 0.0609 \\
& Cycle & -0.0111 & -0.0045 & -0.0001 & 0.0043 & 0.0112 \\
& Seasonal & -0.0015 & -0.0006 & 0.0000 & 0.0006 & 0.0015 \\ \hline
\multirow{3}{*}{$\hat{b}$} & Trend & 0.9742 & 0.9951 & 1.0037 & 1.0154 & 1.0682 \\
& Cycle & 0.8442 & 0.9567 & 1.0029 & 1.0475 & 1.1353 \\
& Seasonal & 0.9241 & 0.9779 & 1.0072 & 1.0335 & 1.0720 \\ \hline
\multicolumn{7}{c}{\textbf{Basic SSA}} \\ \hline
\multirow{3}{*}{$\hat{a}$} & Trend & -0.0602 & -0.0198 & 0.0005 & 0.0199 & 0.0605 \\
& Cycle & -0.0167 & -0.0065 & 0.0000 & 0.0066 & 0.0163 \\
& Seasonal & -0.0035 & -0.0010 & 0.0000 & 0.0009 & 0.0030 \\ \hline
\multirow{3}{*}{$\hat{b}$} & Trend & 0.9880 & 1.0064 & 1.0158 & 1.0337 & 1.1284 \\
& Cycle & 0.7626 & 0.9588 & 1.0158 & 1.0763 & 1.2660 \\
& Seasonal & 0.9269 & 0.9888 & 1.0236 & 1.0561 & 1.1084 \\ \hline
\multicolumn{7}{c}{\textbf{Toeplitz SSA}} \\ \hline
\multirow{3}{*}{$\hat{a}$} & Trend & -0.0581 & -0.0191 & 0.0002 & 0.0195 & 0.0602 \\
& Cycle & -0.0176 & -0.0063 & -0.0001 & 0.0064 & 0.0185 \\
& Seasonal & -0.0019 & -0.0007 & -0.0001 & 0.0006 & 0.0016 \\ \hline
\multirow{3}{*}{$\hat{b}$} & Trend & 0.9814 & 1.0004 & 1.0093 & 1.0284 & 1.1424 \\
& Cycle & 0.7609 & 0.9812 & 1.0513 & 1.1279 & 1.2767 \\
& Seasonal & 0.9351 & 0.9977 & 1.0315 & 1.0667 & 1.1316 \\ \hline
\end{tabular}
\end{spacing}
\caption{Statistics related to the goodness of fit of the extracted signals for the different methods. Simulations for the non-linear model, N=10000. Columns show the quantiles of the empirical distribution of the estimated coefficients of the regression of the generated components over the estimated ones.}
\label{Tabla_3}
\end{center}
\end{table}

As for the linear case, we simulate the model 10000 times for series of
length $T=193$ observations. \ We set $a_{0}$ and $a_{1}$ such that for each
replication $0.5\leq \exp (a_{0}+a_{1}\mu _{t})\leq 1.5$, with $a_{1}>0$. We
apply Circulant SSA with a window length $L=48.$ Table \ref{Tabla_3} shows the quantiles
of the empirical distribution of the estimated coefficients of the
regression in (\ref{reg}) and again we can see that the values of $a$ and $b$
estimated are located around $0$ and $1$ respectively with low dispersion.

In order to check that the estimated residuals are white noise, we fit an
AR(1) to $\widehat{e}_{t}$ as in the linear case. Table \ref{Tabla_4} shows the
quantiles of the empirical distribution of the mean, standard error and
autoregressive coefficient of the residuals of the 10000 replications. The
median of the mean and autoregressive coefficient are close to zero. The
median of the standard deviation is 0.053 (the value used for the
simulations was 0.06).

\begin{table}[h]
\begin{center}
\bigskip
\begin{spacing}{1.5}
\begin{tabular}{c | c c c c c}
\hline
\multirow{2}{*}{\textbf{Statistic}} & \multicolumn{5}{c}{\textbf{Quantiles}} \\
& 5 & 25 & 50 & 75 & 95 \\ \hline
Average & -0.0034 & -0.0011 & 0.0000 & 0.0012 & 0.0033 \\
Standard deviation & 0.0476 & 0.0508 & 0.0531 & 0.0554 & 0.0590 \\
AR(1) coefficient & -0.1727 & -0.0899 & -0.0339 & 0.0250 & 0.1066 \\ \hline
\end{tabular}
\end{spacing}
\caption{Statistics related to the residual term $\widehat{e}_{t}$ in Circulant SSA: Average, standard deviation and autoregressive coefficient of AR(1). Simulations for the non-linear model, N=10000.}
\label{Tabla_4}
\end{center}
\end{table}

As in the linear case, the results from the simulations seem very good. To
compare Circulant SSA with alternative algorithms as Basic and Toeplitz SSA,
we repeat the simulations described in the previous section and apply the
same steps to obtain their trend, cycle and seasonal components. Again we
perform regressions as in (\ref{reg}) between simulated an estimated
components and check $a=0$ and $b=1.$ Table \ref{Tabla_3} shows the quantiles 10000
estimated values for $a$ and $b$. The same conclusions as in the linear case
apply: it can be accepted that the estimated values are close to $a=0$
and $b=1$; empirically, the three versions of SSA are equivalent for the proposed linear model; and some differences can be found in the cycle estimations, where the
distribution of the estimates of $a$ and $b$ show less dispersion around $0$ and $1$ with SSA.

\section{ Application}

We consider monthly series of Industrial Production (IP), index 2010=100, of
six countries: France, Germany, Italy, UK, Japan and US. Industrial
Production is widely followed since it is pointed out in the definition of a
recession by the National Bureau of Economic Research (NBER), \url{http://www.nber.org/cycles/recessions.html}, as one of the four monthly
indicators series to check in the analysis of the business cycle. The sample
covers from January 1970 to September 2014, so the sample size $T=537.$ The
data source is the IMF database. As it can be seen in Figure \ref{Figura_1}, these
indicators show different trend, seasonality and cyclical behavior, and our
goal is to extract these components and discuss about the results.

\begin{figure}[h]
	\caption{Original IP and trend for the different countries.}
	\includegraphics[scale=0.55]{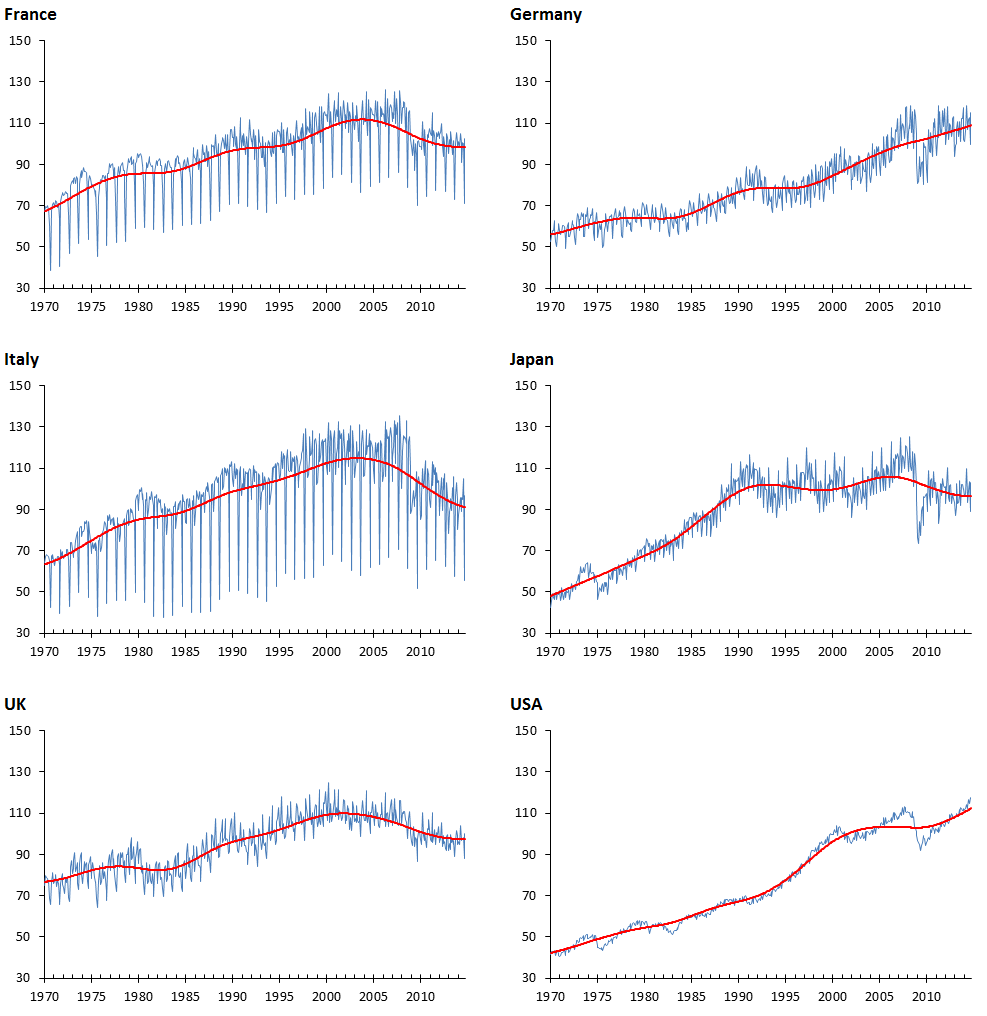}
	\centering
	\label{Figura_1}
\end{figure}

The first step is to establish the window length. Due to the monthly
periodicity and seasonality, we select a window length multiple of 12.
Assuming that the period of the cycle in these series goes from 1 year and a
half to 8 years, we choose a window length multiple of \ 8$\times $12=96
months. From the two available options, 96 and 192 months, we select the
second one since it is larger.

According to (\ref{auto})\ for $k=1,$ we have $w_k=\frac{k-1}{L}=0$ and it
will be associated to the trend. In the same way, for $k=2$, we have $w=1/192
$\thinspace , that corresponds to 192 months or 16 years that are beyond
cyclical movements between 1.5 and 8 years. Therefore, given (\ref{auto})\
and the symmetry of the spectral density, the trend is reconstructed with
the eigentriples 1, 2 and $L+2-k=192$ with the elementary groups by
frequencies from $B_{1}$ and $B_{2}$ respectively. In an analogous way,
assuming that cycle goes from 1.5 to 8 years, this component is associated
to the frequencies $w_k=1/96,1/64,1/48,5/192,1/32,7/192,1/24,3/64,5/96$ and
the cycle signal is reconstructed with the eigentriples 3 to 11 and 183 to
191, with the elementary groups by frequencies from $B_{3}$ to $B_{11}.$
Finally, the seasonal component is associated to the frequencies $%
w_k=1/12,1/6,1/4,1/3,5/12,1/2$ and reconstructed in a similar way with the
eigentriples 17, 33, 49, 65, 81, 97, 113, 129, 145, 161 and 177 and with the
elementary groups by frequencies $B_{17},B_{33},B_{49},B_{65},B_{81},$ and $%
B_{97}.$

Table \ref{Tabla_5} shows the contributions of the signals to the original IP variations
in percentage. First, we highlight that the contribution of the irregular
component (those oscillations not explained by the trend, cyclical or
seasonal components) is smaller than 3.5\% in all the countries. Main
contributions come from the trend and seasonality, that account for more
than 84\% in all the countries. As expected, the contribution of the
seasonal component is almost negligible in US, and quite small in Japan and
Germany, while it is very relevant in Italy and France. Finally, the cycle
contributes in a range between 7.8\% in Italy to 13.8\% in Japan.

\begin{table}[h]
\begin{center}
\bigskip
\begin{spacing}{1.5}
\begin{tabular}{c | c c c c c c}
\hline
\multirow{2}{*}{\textbf{Component}} & \multicolumn{6}{c}{\textbf{Country}} \\
& France & Germany & Italy & Japan & UK & USA \\ \hline
Trend & 52.1 & 77.3 & 42.7 & 79.0 & 72.0 & 87.9 \\
Cycle & 9.5 & 12.6 & 7.8 & 13.8 & 11.1 & 10.3 \\
Seasonal & 35.6 & 6.7 & 47.3 & 5.1 & 13.5 & 0.3 \\
Irregular & 2.8 & 3.4 & 2.2 & 2.1 & 3.4 & 1.5 \\ \hline
\end{tabular}
\end{spacing}
\caption{Contribution of the different signals to IP in the six countries in percentage.}
\label{Tabla_5}
\end{center}
\end{table}

Figure \ref{Figura_1} shows the estimated trends for every country. The trend is a smooth
component that has shown a decreasing evolution since the last decade for
France, Italy and UK as a consequence of the last economic crisis. On the
contrary, in Germnay and US, the trend shows an upward evolution in all the
sample period.

Figure \ref{Figura_2} shows the cyclical component where the shaded areas correspond to
recessions as dated by the OECD. We can see that the extracted cycle
reflects quite well the business cycle for all countries.

\begin{figure}[h]
	\caption{Estimated cycles and OECD announced recessions (shadowed areas).}
	\includegraphics[scale=0.55]{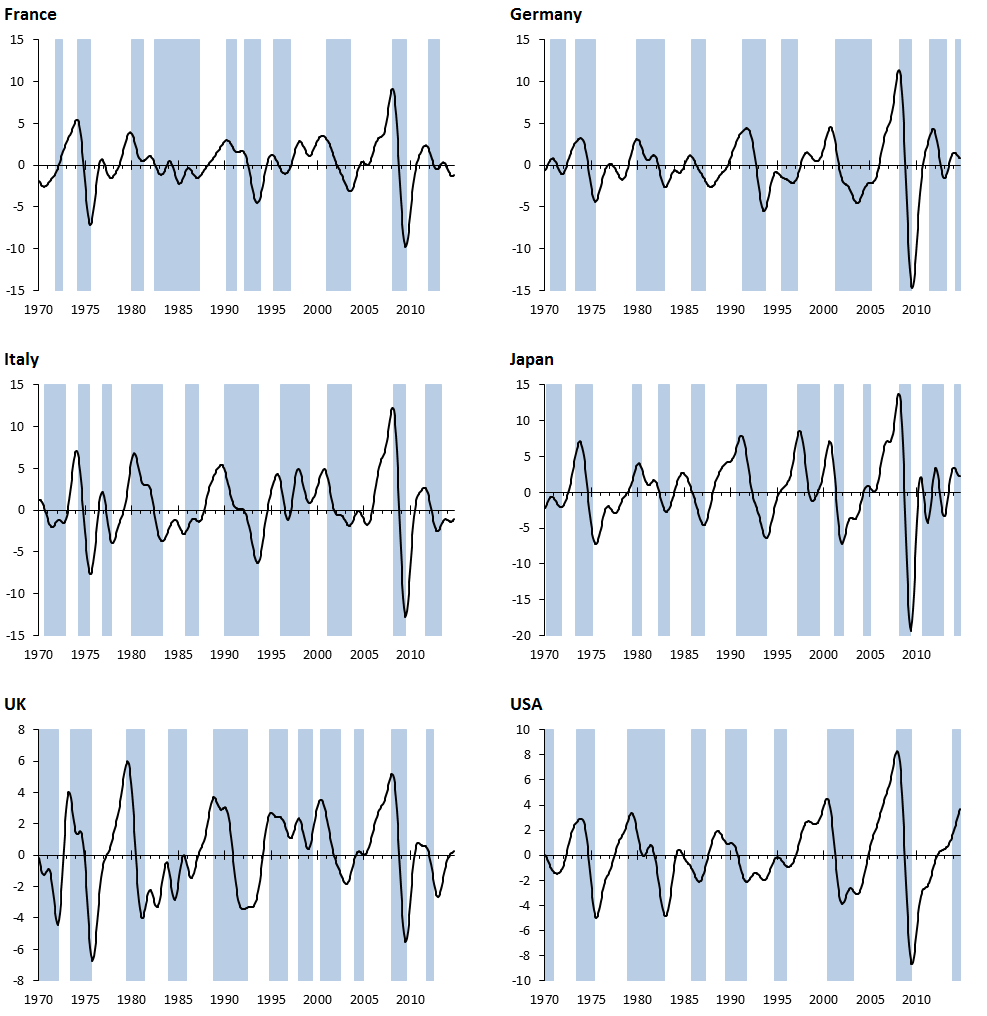}
	\centering
	\label{Figura_2}
\end{figure}

\subsection{Separability of the estimated components with CiSSA}

One desirable property of the signal extraction method is that the resulting
components should be orthogonal. However, in practice, they usually exhibit
cross-correlation. Residual seasonality in seasonal adjusted time series is
another concern in any signal extraction method from very early times \cite{Burman_1980, Dagum_1978}, and it is still a matter of interest nowadays. Findley et al. \cite{Findley_others_2017} point out that "The most fundamental seasonal adjustment deficiency is detectable seasonality after
adjustment". This is also a concern for policy makers \cite{Moulton_Cowan_2016}.

Separability of the elementary series as well as those grouped by
frequencies is an assumption of SSA and should also be a characteristic of
the estimated components. This characteristic is important since many signal
extraction procedures assume zero correlation between their underlying
components, whereas the estimated signals can be quite correlated. The SSA decomposition can be successful only if the resulting additive components of the series are quite separable
from each other \cite{Golyandina_others_2001}.

For a fixed window length $L,$ given two series $\left\{ x_{t}^{(1)}\right\} 
$ and $\left\{ x_{t}^{(2)}\right\} $ extracted from the series $\left\{
x_{t}\right\} $, we say that they are weakly separable if both their column
as well as row spaces are orthogonal, that is $\mathbf{X}^{(1)}\left( 
\mathbf{X}^{(2)}\right) ^{^{\prime }}=\mathbf{0}_{L\times L}$ and $\left( 
\mathbf{X}^{(1)}\right) ^{^{\prime }}\mathbf{X}^{(2)}=\mathbf{0}_{N\times N}$%
. Furthermore, we say that two series $\left\{ x_{t}^{(1)}\right\} $ and $%
\left\{ x_{t}^{(2)}\right\} $ are strongly separable if they are weakly
separable and the two sets of singular values of the trajectory matrices $%
\mathbf{X}^{(1)}$ and $\mathbf{X}^{(2)}$ are disjoint. When the trajectory
matrix of the original time series has not multiple singular values or,
equivalently, each elementary reconstructed series belongs to a different
harmonic, strong separability is guaranteed according to the previous
definition.

Usually, separability is measured in terms of \textbf{w}-correlation \cite{Golyandina_others_2001, Golyandina_Zhigljavsky_2013}, that it is given by 
\begin{equation*}
\rho _{_{12}}^{w}=\frac{\left\langle \mathbf{x}^{(1)},\mathbf{x}%
^{(2)}\right\rangle _{w}}{\left\Vert \mathbf{x}^{(1)}\right\Vert
_{w}\left\Vert \mathbf{x}^{(2)}\right\Vert _{w}}\:,
\end{equation*}%
where $\left\langle \mathbf{x}^{(1)},\mathbf{x}^{(2)}\right\rangle _{w}=(%
\mathbf{x}^{(1)})^{\prime }\mathbf{Wx}^{(2)}\,$\ is the so called w-inner
product, $\left\Vert \mathbf{x}^{(1)}\right\Vert _{w}=\sqrt{\left\langle 
\mathbf{x}^{(1)},\mathbf{x}^{(1)}\right\rangle _{w}}$ and $\mathbf{W}%
=\operatorname{diag}\left( 1,2,\cdots , \right.\underbrace{L,\cdots ,L}_%
{T-2\left( L-1 \right)\text{ times}}\left. ,\cdots ,2,1 \right)$.
Note that the window length $L$ enters the definition of 
\textbf{w}-correlation. We are interested on producing components with 
\textbf{w}-correlation (ideally) zero because, in this case, we can conclude
that the component series are \textbf{w}-orthogonal, i. e. $\left\langle 
\mathbf{x}^{(1)},\mathbf{x}^{(2)}\right\rangle _{w}=0$ and separable \cite{Golyandina_others_2001}.

To show that Circulant SSA produces components that are strongly separable,
first notice that the real eigenvectors $\sqrt{2}R_{\mathbf{u}_{k}}$ and $%
\sqrt{2}I_{\mathbf{u}_{k}}$ (linked to eigenvalues $\lambda _{k}$ and $%
\lambda _{L+2-k}$, respectively, $\lambda _{k}=\lambda _{L+2-k}$) are
orthogonal and have information associated only to frequency $\frac{k-1}{L}$
. Those are the only eigenvectors that have information related to this
frequency. As eigenvectors can be considered filters \cite{Kume_2013, Tome_others_2018}, these
pair of eigenvectors extract elementary series linked to the same frequency
without mixing harmonics of other frequencies. As a result, the two
elementary series, when reconstructed in step 4, have spectral correlation
close to 1 between them and close to zero with the remaining ones. Taking
into account the pairs of reconstructed series per frequency, any grouping
of the reconstructed series results in disjoint sets from the point of view
of the frequency. Then, Circulant SSA produces components that are
approximately strongly separable.

\begin{figure}[h]
	\caption{w-correlation matrix for the elementary reconstructed series for the 30 greatest eigenvalues.}
	\includegraphics[scale=0.55]{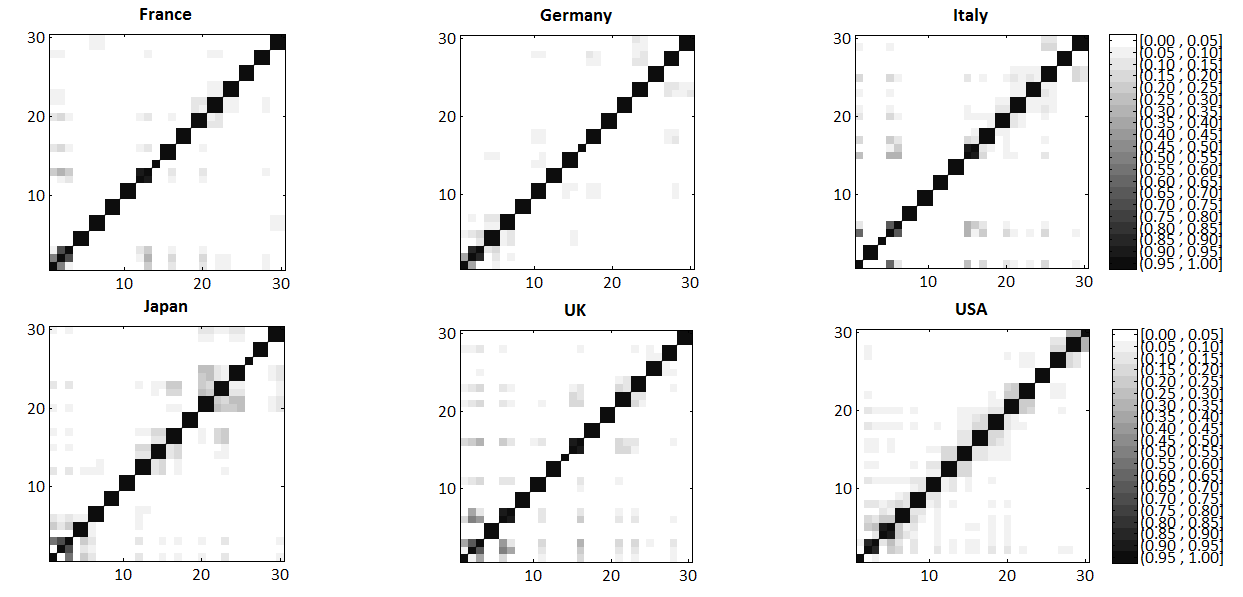}
	\centering
	\label{Figura_3}
\end{figure}

To quickly check how separable the components are, Figure \ref{Figura_3} plots the matrix
of the absolute values of the \textbf{w}-correlations for all the IP
components, coloring in white the absence of \textbf{w}-correlation, in
black \textbf{w}-correlations in absolute value equal to 1 and in a scale of
grey colors the remaining intermediate values. It can be seen that, as
expected, Circulant SSA produces components that are strongly separable.

Furthermore, seasonal adjusted time series for Industrial Production are
largely followed by real time analysts, and one desirable property is that
they have no remaining seasonality. To check the quality of seasonal
adjustment by Circulant SSA, we have applied the \textit{combined test for
seasonality} \cite{Lothian_1978} used in X12-ARIMA. We found that there were no
signs of any remaining seasonality in any of the seasonal adjusted time
series for the different countries \footnote{Results are available from the authors upon request.}.

\section{Conclusions}

In this paper we propose CiSSA,
Circulant SSA, an automated procedure that allows to extract the signal associated to any given frequency specified beforehand. This is a different to previous versions of SSA that, after extracting the principal components of the trajectory matrix, they need to identify their frequency of oscillation and group them in order to form the desired signals. 

CiSSA relies on the eigenstructure of a circulant matrix related to the second moments of the time series. Circulant matrices have closed form solutions for their eigenvalues and eigenvectors. Additionally, we can use them to evaluate the spectral density at specific frequencies. 
We prove that CiSSA is asymptotically equivalent to Basic and Toeplitz SSA.

We also extend the algorithm of Circulant SSA to the nonstationary case providing a
generalization of Gray's theorem.

The properties of Circulant SSA have been checked through a set of
simulations for linear and nonlinear time series models as well as through
the empirical application where we showed that Circulant SSA produces
deseasonalized series clean of any seasonality. The estimated cycles also
matched the business cycles dating proposed by the OECD.

\section{Appendix}

The proof of Theorem 1 relies on a set of lemmas and propositions that we need to shown before. Proposition 1 shows the asymptotic equivalence between the Toeplitz matrices of sample and population second moments, $\mathbf{S}_T \sim \mathbf{\Gamma}_L(f)$. Proposition 2 shows that the sequence of matrices $\mathbf{S}_B$ are also asymptotically equivalent to the Toeplitz matrix of population second moments $\mathbf{\Gamma}_L(f)$. We also need two auxilliary lemmas regarding probability convergence of sample and population second moments. 

\begin{lemma}
For a stationary time series, the sequence $S_{L}=\sum_{m=0}^{L-1}\left( 
\widehat{\gamma }_{m}-\gamma _{m}\right) ^{2}$ converges in probability to 0
when $L\longrightarrow \infty $.
\end{lemma}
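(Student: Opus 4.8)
The plan is to prove the stronger statement that $E[S_L]\to 0$ and then read off convergence in probability for free: since $S_L\ge 0$, Markov's inequality gives $P(S_L\ge\varepsilon)\le E[S_L]/\varepsilon$ for every $\varepsilon>0$, so $E[S_L]\to 0$ forces $S_L\xrightarrow{p}0$. The first simplification I would exploit is that, because the estimator $\widehat{\gamma}_m$ carries the divisor $T-m$, it is \emph{exactly} unbiased, $E[\widehat{\gamma}_m]=\gamma_m$. Hence each summand splits with no bias term, $E[(\widehat{\gamma}_m-\gamma_m)^2]=\operatorname{Var}(\widehat{\gamma}_m)$, and the whole problem reduces to estimating
\[
E[S_L]=\sum_{m=0}^{L-1}\operatorname{Var}(\widehat{\gamma}_m).
\]

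The crux is a bound on $\operatorname{Var}(\widehat{\gamma}_m)$ that is of order $1/(T-m)$ and, crucially, \emph{uniform} in the lag $m$. Writing $n=T-m$ and expanding,
\[
\operatorname{Var}(\widehat{\gamma}_m)=\frac{1}{n^2}\sum_{s=1}^{n}\sum_{t=1}^{n}\operatorname{Cov}(x_s x_{s+m},\,x_t x_{t+m}).
\]
Under fourth-order stationarity the covariance inside depends only on $k=s-t$ and equals $\gamma_k^2+\gamma_{k-m}\gamma_{k+m}+\kappa(k,m)$, with $\kappa$ the fourth-order cumulant term; collecting antidiagonals turns the double sum into $\frac{1}{n}\sum_{|k|<n}(1-|k|/n)\bigl[\gamma_k^2+\gamma_{k-m}\gamma_{k+m}\bigr]$ plus the cumulant contribution. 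Assuming absolutely summable autocovariances and summable fourth cumulants (the standard linear-process or Gaussian hypotheses implicit in the Gray--Szeg\"{o} framework used earlier), Cauchy--Schwarz gives $\sum_k|\gamma_{k-m}\gamma_{k+m}|\le\sum_j\gamma_j^2<\infty$ uniformly in $m$, and the remaining pieces are bounded the same way. This yields a constant $C$, independent of $m$, with
\[
\operatorname{Var}(\widehat{\gamma}_m)\le\frac{C}{T-m},\qquad m=0,1,\dots,L-1.
\]

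It then remains only to sum and to control the growth of $L$ against $T$. Each of the $L$ terms $1/(T-m)$ is at most $1/(T-L+1)=1/N$, where $N=T-L+1$ as in the embedding step, so
\[
E[S_L]=\sum_{m=0}^{L-1}\operatorname{Var}(\widehat{\gamma}_m)\le C\sum_{m=0}^{L-1}\frac{1}{T-m}\le \frac{C\,L}{N}.
\]
In the asymptotic regime relevant to SSA, $L\to\infty$ with $L/N\to 0$ (equivalently $L=o(T)$), so $E[S_L]\le CL/N\to 0$, and Markov's inequality closes the argument. I expect the main obstacle to be the uniform variance bound of the middle paragraph: one must check that the Bartlett-type expansion is controlled uniformly in $m$, which is exactly where the fourth-moment/cumulant-summability hypothesis enters, and it is also what forces the regime $L/N\to 0$, since for $L$ comparable to $T$ the harmonic sum $\sum_{m<L}1/(T-m)$ stays bounded away from zero and the conclusion would fail.
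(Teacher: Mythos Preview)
Your proof is correct under the hypotheses you state, and it follows a genuinely different route from the paper's. The paper expands $(\widehat{\gamma}_m-\gamma_m)^2$ algebraically as $S_L=\sum_m\gamma_m^2+\sum_m\widehat{\gamma}_m^2-2\sum_m\gamma_m\widehat{\gamma}_m$, invokes Parseval to send the first sum to $K:=\sum_{m\ge 0}\gamma_m^2<\infty$, and then argues that the second and third sums tend in probability to $K$ and $2K$ because $\widehat{\gamma}_m\xrightarrow{p}\gamma_m$ termwise as $T\to\infty$; the three pieces cancel. You instead bound $E[S_L]=\sum_m\operatorname{Var}(\widehat{\gamma}_m)$ directly via a uniform-in-$m$ Bartlett variance estimate and finish with Markov's inequality. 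What your route buys is rigour---every step is a concrete inequality, and you obtain $L^1$ convergence as a bonus. What it costs is two hypotheses the paper does not make explicit: summable fourth cumulants (or a linear/Gaussian structure) for the uniform bound $\operatorname{Var}(\widehat{\gamma}_m)\le C/(T-m)$, and the rate condition $L=o(T)$ so that $L/N\to 0$. The paper's argument formally needs only $L<T/2$ and square-summable autocovariances, but its step from termwise convergence $\widehat{\gamma}_m\xrightarrow{p}\gamma_m$ to convergence of the growing sums $\sum_{m<L}\widehat{\gamma}_m^2$ and $\sum_{m<L}\gamma_m\widehat{\gamma}_m$ is not justified as written and would itself require some uniformity or dominated-convergence device to close---essentially the same kind of control your variance bound supplies.
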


\begin{proof}
The sum $S_{L}$ can be decomposed as
\[{{S}_{L}}=\sum\limits_{m=0}^{L-1}{{{\left({{{\hat{\gamma }}}_{m}}-{{\gamma }_{m}}\right)}^{2}}}=\sum\limits_{m=0}^{L-1}{\gamma_{m}^{2}}+\sum\limits_{m=0}^{L-1}{\hat{\gamma}_{m}^{2}}-2\sum\limits_{m=0}^{L-1}{\gamma _{m}^{{}}\hat{\gamma }_{m}^{{}}}\:.\]
The first term in the previous equation is finite when $L\longrightarrow \infty $ by
Parseval's Theorenm, that is $\sum_{m=0}^{\infty }\gamma
_{m}^{2}=K$. Preserving $L<T/2$, $L$ is a monotonically increasing sequence as a function
of $T$ so $L\longrightarrow \infty $ when $T\longrightarrow \infty $. Thus if $L\longrightarrow \infty $ means that $T\longrightarrow \infty $ and, therefore, the sum of infinite addends of the second term converges in probability to $K$ given that $\widehat{\gamma }_{m} \longrightarrow \gamma _{m}$ in probability when $T\longrightarrow \infty$. And, because of the same reasoning, the third
term converges in probability to $2K$ when $L\longrightarrow \infty $. As a
consequence, the sum $S_{L}$ converges in probability to $K+K-2K=0$ as $L\longrightarrow \infty $.
\end{proof}

\begin{proposition}
Let $\left\{\mathbf{ S}_{T}\right\} $ and $\left\{ \mathbf{\Gamma }_{L}\left( f\right)
\right\} $ be two sequences of matrices defined in function of the window
length defined by \ref{vg} and \ref{Gamma_L} respectively. Then, $\mathbf{ S}_{T}\sim \mathbf{\Gamma }_{L}\left(
f\right) $
\end{proposition}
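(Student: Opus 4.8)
The plan is to check the two conditions in the paper's definition of asymptotic equivalence for the sequences $\{\mathbf{S}_{T}\}$ and $\{\mathbf{\Gamma }_{L}(f)\}$: first, that both have bounded eigenvalues (equivalently, uniformly bounded spectral norm) as $L\to\infty$, and second, that $\left\Vert \mathbf{S}_{T}-\mathbf{\Gamma }_{L}(f)\right\Vert _{F}/\sqrt{L}\to 0$. Lemma 1 was set up precisely to feed the second (Frobenius) condition, so I would dispatch that first and treat the uniform norm bound afterwards.

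For the Frobenius condition, the key observation is that $\mathbf{S}_{T}-\mathbf{\Gamma }_{L}(f)$ is again a symmetric Toeplitz matrix, whose $m$-th diagonal carries the single value $d_{m}:=\widehat{\gamma }_{m}-\gamma _{m}$. Counting entries by diagonal ($L$ on the main diagonal and $2(L-m)$ on the $\pm m$ diagonals) gives
\begin{equation*}
\left\Vert \mathbf{S}_{T}-\mathbf{\Gamma }_{L}(f)\right\Vert _{F}^{2}=L\,d_{0}^{2}+2\sum_{m=1}^{L-1}(L-m)\,d_{m}^{2}\le 2L\sum_{m=0}^{L-1}d_{m}^{2}=2L\,S_{L},
\end{equation*}
where $S_{L}$ is exactly the sum appearing in Lemma 1. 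Dividing by $L$ yields $\left\Vert \mathbf{S}_{T}-\mathbf{\Gamma }_{L}(f)\right\Vert _{F}^{2}/L\le 2S_{L}$, and since $S_{L}\to 0$ in probability by Lemma 1, the Frobenius condition holds (in probability). This is the core of the argument.

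For the uniform bound on the eigenvalues, the population matrix is immediate: $\mathbf{\Gamma }_{L}(f)$ is the Hermitian Toeplitz matrix generated by the spectral density $f$, so by the standard eigenvalue-localization result its eigenvalues lie in $[\inf f,\sup f]$, and continuity of the $2\pi$-periodic $f$ gives $\sup f=M_{f}<\infty$; hence the spectral norm of $\mathbf{\Gamma }_{L}(f)$ is at most $M_{f}$ for every $L$. The delicate point, which I expect to be the main obstacle, is the corresponding bound for the random matrix $\mathbf{S}_{T}$. Note that the Frobenius estimate above only controls the spectral norm of the difference by $O(\sqrt{L\,S_{L}})$, which need not stay bounded, so it cannot simply be inserted into a triangle inequality to transfer the bound from $\mathbf{\Gamma }_{L}(f)$ to $\mathbf{S}_{T}$. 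Instead I would bound the spectral norm of $\mathbf{S}_{T}$ by its maximal absolute row sum, $\max_{i}\sum_{j}|\widehat{\gamma }_{|i-j|}|\le 2\sum_{m=0}^{L-1}|\widehat{\gamma }_{m}|$ (using that, for symmetric matrices, the spectral norm is dominated by the $\ell_\infty$ operator norm), and then control $\sum_{m}|\widehat{\gamma }_{m}|$ in probability by splitting it into $\sum_{m}|\gamma _{m}|$, finite under absolute summability of the autocovariances, which is the short-memory structure implicit in the stationary setup, plus an estimation remainder kept small via $\widehat{\gamma }_{m}\to\gamma _{m}$. This last control is the genuinely non-automatic part: because the accumulated estimation error can grow with $L$, it requires regularity beyond what the Frobenius step used (summability together with moment or mixing conditions ensuring the remainder does not accumulate). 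Once this bound is in place, both matrices have bounded eigenvalues and, combined with the Frobenius estimate, Gray's definition yields $\mathbf{S}_{T}\sim\mathbf{\Gamma }_{L}(f)$.
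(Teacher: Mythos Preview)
Your Frobenius argument is essentially the paper's: both recognize $\mathbf{S}_{T}-\mathbf{\Gamma }_{L}(f)$ as symmetric Toeplitz, count along diagonals to get $\tfrac{1}{L}\|\mathbf{S}_{T}-\mathbf{\Gamma }_{L}(f)\|_{F}^{2}\le 2\sum_{m=0}^{L-1}(\widehat{\gamma}_{m}-\gamma_{m})^{2}=2S_{L}$, and invoke Lemma~1 together with squeezing.

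Where you diverge is in the treatment of the uniform eigenvalue bound. The paper disposes of this in one line: $\mathbf{\Gamma }_{L}(f)$ has bounded eigenvalues by the Toeplitz localization result (Tilli), and for $\mathbf{S}_{T}$ it simply writes ``the matrices $\mathbf{S}_{T}$ are Toeplitz and symmetric, therefore their real eigenvalues are also bounded.'' You are right to flag this as the delicate step---being Toeplitz and symmetric does not by itself give a uniform-in-$L$ bound, and the Frobenius estimate is too weak to transfer the bound from $\mathbf{\Gamma }_{L}(f)$ via the triangle inequality. Your proposed route through the row-sum (Gershgorin/$\ell_\infty$-operator-norm) bound $\|\mathbf{S}_{T}\|\le 2\sum_{m=0}^{L-1}|\widehat{\gamma}_{m}|$, split as $\sum|\gamma_{m}|$ plus an estimation remainder, is a reasonable concrete argument, and you are candid that controlling the remainder uniformly in $L$ needs summability plus some additional regularity (moments/mixing) not explicitly assumed in the paper. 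In short: your proof matches the paper on the main Frobenius step and is more scrupulous on the boundedness step, which the paper effectively asserts rather than proves.
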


\begin{proof}
We know that the eigenvalues of the Toeplitz matrix $\mathbf{\Gamma }_{L}\left(
f\right) $ are bounded \cite{Tilli_1998}. The matrices $\mathbf{ S}_{T}$ are Toeplitz and symmetric,
therefore their real eigenvalues are also bounded. We must proof that $%
\underset{L\longrightarrow \infty }{\lim }\frac{1}{L}\left\Vert\mathbf{ S}_{T}-\mathbf{\Gamma}
_{L}\left( f\right) \right\Vert _{F}=0$. We can write
\begin{eqnarray*}
0 & \leq &\frac{1}{L}\left\Vert\mathbf{ S}_{T}-\mathbf{\Gamma }_{L}\left( f\right) \right\Vert
_{F}^{2}=\frac{1}{L}\sum_{m=1-L}^{L-1}\left( L-m\right) \left( \widehat{%
\gamma }_{m}-\gamma _{m}\right) ^{2} \leq \\
& \leq & 2\sum_{m=0}^{L-1}\frac{L-m}{L}%
\left( \widehat{\gamma }_{m}-\gamma _{m}\right) ^{2} \leq \\
&\leq&
2 \sum_{m=0}^{L-1}\left( \widehat{\gamma }_{m}-\gamma _{m}\right) ^{2}.
\end{eqnarray*}
By the \emph{Squeeze Theorem} and the previous Lemma, we obtain that $%
\underset{L\longrightarrow \infty }{\lim }\frac{1}{L}\left\Vert\mathbf{ S}_{T}-\mathbf{\Gamma}
_{L}\left( f\right) \right\Vert _{F}=0$ and therefore it is proved that $%
\mathbf{S}_{T}\sim \mathbf{\Gamma }_{L}\left( f\right)$ .
\end{proof}

In Basic SSA, it is possible to substitute the matrix $\mathbf{S}=\mathbf{XX}'$ by $\mathbf{S}_{B}=\mathbf{XX}'/N$ for stationary time series \cite{Golyandina_others_2001}.
Matrices $\mathbf{S}$ and $\mathbf{S}_{B}$, with dimension $L\times L$, have the same
eigenvalues and the eigenvectors of $\mathbf{S}_{B}$ are those of $\mathbf{S}$
multiplied by $1/N$. The elements of matrix $\mathbf{S}_{B}$ are given by $\widetilde{s}_{ij}=\frac{1}{N}\sum_{t=1}^{i+N-1}x_{t}x_{t+j-i}$ and, under stationarity, it holds that $\widetilde{s}_{ij}$ converges to $
\gamma _{\left\vert t-j\right\vert \text{ }}$as $N\longrightarrow \infty $,
it is, when $T\longrightarrow \infty $. From matrix $\mathbf{S}_{B}$ we
obtain a sequence of symmetric matrices $\left\{ \mathbf{S}_{B}\right\} $ as
a function on the window lenght $L$. To relate this sequence $\left\{ 
\mathbf{S}_{B}\right\} $ of matrices symmetric with the sequence of Toeplitz
symmetric matrices $\left\{ \mathbf{\Gamma }_{L}\left( f\right) \right\} $ we must
proof the following Lemma.

\begin{spacing}{1}
\begin{lemma}
Under stationarity, the sequence $S_{L}=\sum_{m=0}^{L-1}\underset{
\begin{array}{c}1\leq i,j\leq L \\ \left\vert i-j\right\vert =m
\end{array}}{\max }\left\{ \left( \widetilde{s}_{ij}-\gamma _{m}\right) ^{2}\right\}$ converges in probability to $0$ when $L\longrightarrow \infty $.
\end{lemma}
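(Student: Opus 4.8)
The plan is to follow the structure of Lemma 1, the single genuinely new ingredient being the inner maximum over the growing set of index pairs with $|i-j|=m$. First I would record the precise form of the entries: taking $j=i+m$ (the case $j<i$ being identical by symmetry of $\mathbf{S}_{B}$) and using zero-mean stationarity,
\begin{equation*}
\widetilde{s}_{i,i+m}=\frac{1}{N}\sum_{u=i}^{i+N-1}x_{u}x_{u+m},
\end{equation*}
so that each $\widetilde{s}_{i,i+m}$ is an \emph{unbiased} windowed estimator of $\gamma_{m}$ formed from the product sequence $z^{(m)}_{u}:=x_{u}x_{u+m}$, with $E\,\widetilde{s}_{i,i+m}=\gamma_{m}$. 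As the surrounding text already notes, every individual $\widetilde{s}_{ij}$ converges in probability to $\gamma_{|i-j|}$, so each term $(\widetilde{s}_{ij}-\gamma_{m})^{2}$ vanishes in probability exactly as in Lemma 1; the real work is to show that taking the maximum over the $L-m$ starting positions $i=1,\dots,L-m$ and then summing over $m$ still yields something negligible.

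To control the maximum I would introduce the centred partial-sum process $W^{(m)}_{k}:=\sum_{u=1}^{k}(z^{(m)}_{u}-\gamma_{m})$ for $0\le k\le T-m$ and write each deviation as a normalised increment,
\begin{equation*}
\widetilde{s}_{i,i+m}-\gamma_{m}=\frac{1}{N}\bigl(W^{(m)}_{i+N-1}-W^{(m)}_{i-1}\bigr),
\end{equation*}
so that $\max_{|i-j|=m}(\widetilde{s}_{ij}-\gamma_{m})^{2}\le \frac{4}{N^{2}}\max_{0\le k\le T-m}(W^{(m)}_{k})^{2}$. Under the short-memory regularity already implicit in Lemma 1---summable autocovariances and finite fourth moments, so that $z^{(m)}$ has a finite long-run variance $\sigma_{m}^{2}$---a maximal inequality of Doob/H\'{a}jek--R\'{e}nyi type gives $E[\max_{k\le T-m}(W^{(m)}_{k})^{2}]\le C(T-m)\sigma_{m}^{2}$ with $C$ independent of $m$. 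Combining the two bounds produces the per-frequency estimate $E[\max_{|i-j|=m}(\widetilde{s}_{ij}-\gamma_{m})^{2}]\le 4C(T-m)\sigma_{m}^{2}/N^{2}$.

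Summing over $m$ and using $N=T-L+1>T/2$ (from $L<T/2$) gives $E[S_{L}]\le (16C/T)\sum_{m=0}^{L-1}\sigma_{m}^{2}$, after which $S_{L}\to 0$ in probability follows from Markov's inequality as soon as the right-hand side tends to $0$. This is exactly where the main obstacle sits, and it is more delicate than its counterpart in Lemma 1. There the relevant quantities were $\gamma_{m}^{2}$, summable by Parseval; here the $\sigma_{m}^{2}$ are long-run variances of \emph{products} and do not decay---a Wick/Isserlis computation in the Gaussian case gives $\sigma_{m}^{2}=\sum_{h}(\gamma_{h}^{2}+\gamma_{h+m}\gamma_{h-m})\to\sum_{h}\gamma_{h}^{2}>0$, so $\sum_{m=0}^{L-1}\sigma_{m}^{2}=O(L)$ and the bound only reads $E[S_{L}]=O(L/T)$. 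Hence convergence holds in the standard SSA asymptotic regime $L\to\infty$ with $L/T\to 0$ (equivalently $L=o(N)$); a direct check on i.i.d.\ data, where $\tfrac{1}{L}\lVert \mathbf{S}_{B}-\mathbf{\Gamma}_{L}\rVert_{F}^{2}$ is of exact order $L/T$, indicates that this rate cannot in general be relaxed to the finite-sample constraint $L<T/2$ alone. The two points to make fully rigorous are therefore the uniform maximal inequality for the dependent product sequence $z^{(m)}$ and the admissible growth rate of $L$; the remainder parallels Lemma 1 and feeds directly into Proposition 2 through the elementary bound $\tfrac{1}{L}\lVert \mathbf{S}_{B}-\mathbf{\Gamma}_{L}\rVert_{F}^{2}\le 2S_{L}$.
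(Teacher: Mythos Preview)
Your approach is genuinely different from the paper's. The paper mimics Lemma~3 almost verbatim: it expands $(\widetilde{s}_{ij}-\gamma_m)^2$ and bounds $S_L$ above by the three-term sum
\[
\sum_{m=0}^{L-1}\gamma_m^2 \;+\; \sum_{m=0}^{L-1}\max_{|i-j|=m}\widetilde{s}_{ij}^{\,2} \;-\; 2\sum_{m=0}^{L-1}\min_{|i-j|=m}(\gamma_m\widetilde{s}_{ij}),
\]
and then argues that because each individual $\widetilde{s}_{ij}\to\gamma_{|i-j|}$ in probability as $N\to\infty$, the second and third sums converge to $K=\sum_m\gamma_m^2$ and $2K$ respectively, so the right-hand side collapses to zero by the squeeze theorem. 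There is no partial-sum representation, no maximal inequality, and no discussion of the growth rate of $L$ relative to $T$: the paper simply asserts that pointwise convergence of $\widetilde{s}_{ij}$ transfers both to the maximum over the growing index set $\{(i,j):|i-j|=m,\;1\le i,j\le L\}$ and to the sum over $m=0,\dots,L-1$.

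Your route via the centred partial-sum process $W^{(m)}$ and a Doob/H\'ajek--R\'enyi bound is more explicit about where the difficulty lies, and the obstacle you isolate---that the long-run variances $\sigma_m^2$ of the product sequences do not decay in $m$, so the resulting bound is $O(L/T)$ rather than $o(1)$ under $L<T/2$ alone---is a genuine point the paper's argument does not address; your i.i.d.\ check makes this concrete. What the paper's approach buys is brevity and an exact structural parallel with the preceding lemma; what yours buys is an explicit rate, a clear list of the moment and short-memory conditions actually needed (finite fourth moments, a uniform maximal inequality for the dependent product sequence), and the honest conclusion that one should work in the regime $L/T\to 0$ rather than merely $L<T/2$.
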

\end{spacing}

\begin{proof}
The sum $S_{L}$ verifies that
\begin{spacing}{1}
\begin{equation*}
0\leq S_{L} \leq \sum_{m=0}^{L-1}\gamma _{m}^{2}+\sum_{m=0}^{L-1}\underset{%
\begin{array}{c}
1\leq i,j\leq L \\ 
\left\vert i-j\right\vert =m%
\end{array}%
}{\max }\left\{ \widetilde{s}_{ij}^{2}\right\} -2\sum_{m=0}^{L-1} \underset{%
\begin{array}{c}
1\leq i,j\leq L \\ 
\left\vert i-j\right\vert =m%
\end{array}%
}{\min }\left\{ \gamma _{m}\widetilde{s}_{ij}\right\} .
\end{equation*}
\end{spacing}

\bigskip
By Parseval's Theorem, the first term on the right is finite as $L\longrightarrow \infty $, it is quadratic summable, $\sum_{m=0}^{L-1}\gamma_{m}^{2}=K$. We know that $N=T-L+1$. Given that $L<T/2$, $N>T/2+1$ and, further $L$ and $N$ are monotonically increasing sequences as functions of $T$, so $L, N\longrightarrow \infty $, as $T\longrightarrow \infty $. Therefore, if $L\longrightarrow \infty $ means that $T\longrightarrow \infty $ and the sum of infinite addends of the second term converges in probability to $K$ because $\widetilde{s}_{ij}\longrightarrow \gamma
_{\left\vert i-j\right\vert },$for all $i,j$, when $N\longrightarrow \infty $, that is, when $T\longrightarrow \infty$.
And, following the same reasoning, the third term converges in probability
to $2K$ when $L\longrightarrow \infty $. Therefore the right term of the
inequality converges to $0$ in probability. Finally by the Squeze
Theorem, $S_{L}$ converges in probability to $0$.
\end{proof}

\begin{proposition}
Let $\left\{ \mathbf{S}_{B}\right\}$ and $\left\{ \mathbf{\Gamma }_{L}\left( f\right)
\right\}$ be the sequences of matrices defined as a function of the window
length $L$. Then $\mathbf{S}_{B}\sim \mathbf{\Gamma }_{L}\left( f\right) $.
\end{proposition}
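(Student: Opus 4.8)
The plan is to verify, exactly as in Proposition 1, the two conditions in Gray's definition of asymptotic equivalence for the sequences $\{\mathbf{S}_{B}\}$ and $\{\mathbf{\Gamma}_{L}(f)\}$: first, that the eigenvalues of both families are uniformly bounded; and second, that the normalized Frobenius norm of the difference vanishes, i.e. $\frac{1}{L}\left\Vert \mathbf{S}_{B}-\mathbf{\Gamma}_{L}(f)\right\Vert_{F}^{2}\rightarrow 0$, which is equivalent to the weak-norm condition $\frac{1}{\sqrt{L}}\left\Vert \mathbf{S}_{B}-\mathbf{\Gamma}_{L}(f)\right\Vert_{F}\rightarrow 0$. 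The structure mirrors Proposition 1, but the role played there by Lemma 1 is here played by Lemma 2, which is tailored to the fact that $\mathbf{S}_{B}$, unlike $\mathbf{S}_{T}$, is not Toeplitz.

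For the boundedness condition, the eigenvalues of $\mathbf{\Gamma}_{L}(f)$ are bounded by $\sup f<\infty$ \cite{Tilli_1998}, as already used. Since $\mathbf{S}_{B}=\mathbf{XX}'/N$ is symmetric and positive semidefinite, its eigenvalues are real and nonnegative, so it suffices to bound the largest one uniformly in $L$. I would do this with a Gershgorin/row-sum estimate, $\lambda_{\max}(\mathbf{S}_{B})\leq\max_{i}\sum_{j=1}^{L}|\widetilde{s}_{ij}|$, and then use that each $\widetilde{s}_{ij}$ converges to $\gamma_{|i-j|}$ together with absolute summability of the autocovariances, $\sum_{m}|\gamma_{m}|<\infty$, to conclude that the row sums remain bounded in probability uniformly in $L$.

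For the Frobenius condition, I would expand $\left\Vert \mathbf{S}_{B}-\mathbf{\Gamma}_{L}(f)\right\Vert_{F}^{2}=\sum_{i=1}^{L}\sum_{j=1}^{L}(\widetilde{s}_{ij}-\gamma_{|i-j|})^{2}$ and regroup the double sum according to the common value $m=|i-j|$. There are at most $2(L-m)$ index pairs with $|i-j|=m$, and each squared difference is dominated by the maximum over such pairs, so
\begin{equation*}
0\leq\frac{1}{L}\left\Vert \mathbf{S}_{B}-\mathbf{\Gamma}_{L}(f)\right\Vert_{F}^{2}\leq 2\sum_{m=0}^{L-1}\frac{L-m}{L}\max_{\substack{1\leq i,j\leq L\\ |i-j|=m}}(\widetilde{s}_{ij}-\gamma_{m})^{2}\leq 2\sum_{m=0}^{L-1}\max_{\substack{1\leq i,j\leq L\\ |i-j|=m}}(\widetilde{s}_{ij}-\gamma_{m})^{2}=2S_{L},
\end{equation*}
which is exactly twice the sequence $S_{L}$ of Lemma 2. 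Since Lemma 2 gives $S_{L}\rightarrow 0$ in probability, the Squeeze Theorem yields $\frac{1}{L}\left\Vert \mathbf{S}_{B}-\mathbf{\Gamma}_{L}(f)\right\Vert_{F}^{2}\rightarrow 0$, establishing $\mathbf{S}_{B}\sim\mathbf{\Gamma}_{L}(f)$.

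The main obstacle is the boundedness of the eigenvalues of $\mathbf{S}_{B}$: because $\mathbf{S}_{B}$ fails to be Toeplitz, the one-line ``Toeplitz and symmetric, hence bounded'' argument used for $\mathbf{S}_{T}$ in Proposition 1 is unavailable, so one must argue boundedness directly through the row-sum bound and summability of the $\gamma_{m}$ (equivalently, by controlling $\left\Vert \mathbf{X}\right\Vert_{2}^{2}/N$). Once that is in hand, the Frobenius step is routine, since the non-Toeplitz structure is precisely what the $\max$ inside the definition of $S_{L}$ in Lemma 2 was designed to absorb.
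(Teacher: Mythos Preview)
Your proposal is correct and follows essentially the same route as the paper: the Frobenius-norm inequality chain you write is identical to the paper's, culminating in the bound $2S_{L}$ from Lemma 2 and an appeal to the Squeeze Theorem. The only divergence is in the eigenvalue-boundedness step: the paper argues that $\mathbf{S}_{B}$ converges in probability to a Toeplitz matrix and therefore inherits bounded eigenvalues in probability, whereas you supply a direct Gershgorin row-sum bound using $\sum_{m}|\gamma_{m}|<\infty$. Your argument is more explicit (and requires the mildly stronger hypothesis of absolute rather than square summability of the autocovariances), but both are acceptable and the overall strategy is the same.
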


\begin{proof}
The eigenvalues of the Toeplitz matrix $\mathbf{\Gamma }_{L}\left(f\right)$ are bounded \cite{Tilli_1998}. The symmetric matrices $\mathbf{S}_{B}$ converge to Toeplitz matrix in probability. Then, their eigenvalues are bounded in probability. Now we must proof that $\underset{L\longrightarrow \infty }{\lim }
\frac{1}{\sqrt{L}}\left\Vert \mathbf{S}_{B}-\mathbf{\Gamma }_{L}\left( f\right)
\right\Vert _{F}=0$. We can write,
\begin{spacing}{1}
\begin{eqnarray*}
0 & \leq & \frac{1}{L}\left\Vert \mathbf{S}_{B}-\mathbf{\Gamma }_{L}\left( f\right)
\right\Vert _{F}^{2}=\frac{1}{L}\sum_{i=1}^{L}\sum_{j=1}^{L}\left( 
\widetilde{s}_{ij}-\gamma _{\left\vert i-j\right\vert }\right) ^{2} \leq \\
& \leq&
2\sum_{m=0}^{L-1}\frac{L-m}{L}\underset{%
\begin{array}{c}
1\leq i,j\leq L \\ 
\left\vert i-j\right\vert =m%
\end{array}%
}{\max }\left\{ \left( \widetilde{s}_{ij}-\gamma _{m}\right) ^{2}\right\} \leq \\
& \leq & 2\sum_{m=0}^{L-1}\underset{%
\begin{array}{c}
1\leq i,j\leq L \\ 
\left\vert i-j\right\vert =m%
\end{array}%
}{\max }\left\{ \left( \widetilde{s}_{ij}-\gamma _{m}\right) ^{2}\right\} .
\end{eqnarray*}
\end{spacing}
Therefore, by the Squeeze Theorem and previous Lemma, it holds that $\underset{L\longrightarrow \infty }{\lim }\frac{1}{L}\left\Vert \mathbf{S}_{B}-\mathbf{\Gamma }_{L}\left( f\right) \right\Vert _{F}^{2}=0$ and $\mathbf{S}_{B}\sim\mathbf{\Gamma }_{L}\left( f\right) $.
\end{proof}

\bigskip
\textbf{Proof of Theorem 1:} 
We have that $\mathbf{S}_{T}\sim \mathbf{\Gamma }_{L}\left( f\right) $ and $\mathbf{S}_B\sim
\mathbf{\Gamma }_{L}\left( f\right) $ by propositions 2 and 4 respectively, and that together with the transitive property lead to $\mathbf{S}_B\sim\mathbf{ S}_{T}$. Given that by construction $\mathbf{S}_{T}\sim \mathbf{S}_{C}$ \cite{Pearl_1973} and, again, by transitive property we have that $\mathbf{S}_B\sim \mathbf{S}_{C}$. \qquad $\blacksquare$

\bigskip
\textbf{Proof of Theorem 2:} As defined, the function $g(w)$ is real, continuous and 2$\pi $-periodic. Its image is $\left[ \frac{1}{M_{s}} ,\max_{i}M_{h_{i}}\right] $ being different from zero in the whole interval.
Then, by the properties of the inverse of Toeplitz matrices $\left( \mathbf{T }_{L}(g^{-1})\right) ^{-1}\sim \mathbf{T}_{L}(g)$. Moreover, if $F(x)$ is continuos in $\left[ \frac{1}{M_{s}},\max_{i}M_{h_{i}}\right] $, then $F(\frac{1}{x})$ is continuos in $\left[ \frac{1}{\max_{i}M_{h_{i}}},M_{s}\right]$. Since the assumption of $g(w)$ being a Wiener's class function relaxes to a continuous and 2$\pi$-periodic function \cite{Gutierrez_Crespo_2012}, Szeg\"{o}'s theorem leads to (\ref{teorema}).\qquad $\blacksquare$

\singlespacing

\bigskip

\end{document}